\documentclass[
  aip,
  jmp,
  amsmath,
  amssymb,
  reprint
]{revtex4-2}
\usepackage{microtype}[draft]
\usepackage[scr=euler]{mathalfa}
\usepackage{eucal}[mathscr]
\usepackage{stmaryrd}
\usepackage{tikz-cd}
\usepackage{amsmath, amssymb}
\usepackage{mathtools}
\usepackage{amsthm}

\usepackage{ebproof}
\usepackage{proof}

\usepackage{boxedminipage}

\usepackage[framemethod=TikZ]{mdframed}
\mdfsetup{%
middlelinecolor=red,
middlelinewidth=2pt,
backgroundcolor=red!10,
roundcorner=10pt}

\usepackage{tikz}
   \usetikzlibrary{shapes.geometric, intersections}

\usepackage{comonads}

\renewcommand{\Sg}{\Sigma}
\newcommand{\Real}{\mathbb{R}}
\newcommand{\Eff}{\mathbb{E}}
\newcommand{\Proj}{\mathbb{P}}
\newcommand{\EA}{\mathbf{EA}}
\renewcommand{\EM}{\mathbf{EM}}
\newcommand{\orth}{\, \bot \,}
\newcommand{\Lop}{\mathcal{L}_{+}}
\newcommand{\povm}{\textnormal{\textsc{povm}}}
\newcommand{\pvm}{\textnormal{\textsc{pvm}}}
\newcommand{\XX}{\mathbb{X}}
\newcommand{\Mod}{\mathbf{Mod}}
\newcommand{\St}{\mathfrak{S}}
\newcommand{\OV}{\mathbf{OVect_u}}
\newcommand{\CF}{\mathrm{CF}}
\newcommand{\NCF}{\mathrm{NCF}}
\newcommand{\NCFobs}{\mathrm{NCF}_{\mathsf{obs}}}

\newcommand{\CFobs}{\mathrm{CF}_{\mathsf{obs}}}
\newcommand{\RR}{\mathcal{R}}

\begin{document}

\title{Effect-valued measurement models and contextuality}

\author{Samson Abramsky}
\email{s.abramsky@ucl.ac.uk}
\affiliation{
Department of Computer Science,
University College London,
66--72 Gower St.,
London WC1E 6EA, U.K.
}

\begin{abstract}
We generalize the Abramsky--Brandenburger sheaf-theoretic treatment
of contextuality by replacing probability distributions with
distributions valued in a convex effect algebra \(A\). This yields a
notion of \(A\)-valued measurement model encompassing probabilistic,
deterministic, and quantum measurement structures within a single
framework.

States \(\sigma:A\to[0,1]\) induce ordinary empirical models,
allowing observable behaviour to be viewed as arising from
effect-valued measurement data. This leads to a distinction between
internal contextuality of \(A\)-models and observable contextuality
after state evaluation. We analyze the relationship between these
notions and identify coherence conditions under which observable
classical explanations assemble into internal ones.

Using the ordered-vector-space representation of effect modules, we
show that non-contextuality is characterized by feasibility of an
associated cone program, generalizing the linear-programming
formulation of contextuality in the probabilistic case. We also
introduce an effect-valued contextual fraction and study its relation
to observable contextuality witnesses. We show how cone duality leads to a notion of Bell witnesses for contextuality as a direct generalization of Bell inequalities.

Finally, we analyze sharp realizability and uniform dilation for
measurement models, clarifying the relationship between general POVMs
and projective measurements, and show how resource-indexed effect
structures induce graded monads generalizing the quantum monad.
\end{abstract}

\maketitle


\section{Introduction}

Quantum theory describes measurement outcomes via positive operator-valued measures (POVMs), whose effects capture the operational content of experiments. While projective measurements play a distinguished role, realistic quantum systems require general measurements, and the structural role of these in non-classical phenomena such as contextuality remains incompletely understood.

At the same time, sheaf-theoretic approaches to contextuality \cite{AbramskyBrandenburger2011} provide a precise mathematical account of non-classicality in terms of the non-existence of global sections, together with associated cohomological obstructions. These methods capture the incompatibility of measurement outcomes across different contexts, but are typically formulated in probabilistic terms and do not directly accommodate general effect-valued measurements.

The aim of this paper is to develop a unified approach to quantum measurement in which contextuality, general measurements, and compositional structure can be analysed together. We introduce \emph{measurement models} defined on presheaves valued in effect algebras,  in which outcomes are assigned elements of an abstract effect structure, while states recover operational predictions via evaluation. This extends probabilistic models while retaining a direct connection to standard quantum formalism.

\medskip

\noindent
\textbf{Relation to existing operational frameworks.}
Generalised probabilistic theories (GPTs) \cite{barrett2007information,plavala2023general} and operational approaches to contextuality \cite{spekkens2005contextuality} provide powerful and widely used frameworks for describing states, effects, and measurements. In GPTs, the primary objects are convex state spaces and their dual effect spaces, while operational approaches, based on ontological models and operational equivalences, characterise contextuality via the (non-)existence of representations respecting specified equivalences, and naturally accommodate unsharp measurements.

The perspective adopted here is complementary. Rather than focusing on state spaces or ontological representations, we take as fundamental the structure of measurement scenarios themselves, including compatibility relations between measurements and the problem of extending locally defined outcomes across overlapping contexts. This leads to a sheaf-theoretic and obstruction-theoretic formulation, in which contextuality appears as a failure of global consistency.

In this sense, GPTs and operational approaches describe what measurements and states are, while the present framework analyses how measurement outcomes can or cannot be consistently combined across incompatible contexts. From a mathematical viewpoint, this involves presheaf constructions over measurement covers, structures which are not captured by convex or affine structures alone. In particular, contextuality is treated here as a structural obstruction independent of any choice of ontological model or equivalence structure.

\medskip

From the perspective of the sheaf-theoretic approach to contextuality, the key modification to the functorial architecture is to generalize from the usual distribution monads valued in semirings to distribution functors valued in effect algebras.
This leads to the generalization from the usual  probabilistic empirical models to a broader class of \emph{measurement models}. The connection with empirical models is retained via the states of the effect algebra, which induce mappings from measurement models to empirical models, which can be regarded as observable windows on the behaviour of the measurement models. 

\medskip

\noindent
The theory developed here can be seen as integrating key features of these approaches. From the sheaf-theoretic perspective, it inherits the analysis of contextuality in terms of compatibility structure and obstruction to global sections. From generalised probabilistic theories, it incorporates the use of effect structures and states to capture the operational content of measurements. From operational approaches based on equivalences, it takes the emphasis on general (including unsharp) measurements. These elements are brought together within a single setting in which contextuality, general measurements, convex structure and compositional behaviour can be analysed in a unified way.

\medskip

\noindent
\textbf{Overview of results.}
We develop a theory of effect--valued measurement models
and establish a number of structural results.

First, we show that contextuality extends naturally to the
effect-valued setting, and that the standard
Abramsky--Brandenburger framework is recovered when
$A=[0,1]$. States $\sigma:A\to[0,1]$ induce observable windows
on $A$-models, leading to a distinction between internal
and observable contextuality.

A central result establishes that internal non-contextuality
is equivalent to the existence of coherent families of
observable classical explanations across states.

Using the ordered-vector-space representation of effect modules,
we characterize contextuality as feasibility of an associated
cone program, generalizing the linear-programming formulation
for probabilistic empirical models. We  extend the
contextual fraction to the effect-valued setting and show that
cone duality yields generalized contextuality witnesses extending
Bell inequalities.

We then analyze sharp realizability and uniform dilation for
measurement models, identifying structural obstructions to
globally compatible projective realizations.

Finally, we show that resource-indexed effect structures induce
graded monads generalizing the quantum monad on relational
structures, providing a compositional setting for contextuality,
non-local games, and quantum advantage.



\section{Background}

\subsection{The sheaf-theoretic formulation of contextuality}
\label{sheaf:subsec}

We briefly review the basics of the Abramsky-Brandenburger sheaf-theoretic approach to contextuality \cite{AbramskyBrandenburger2011}.

A \emph{measurement scenario} is a structure $(X, \Sg, O)$, where:
\begin{itemize}
\item $X$ is a set of \emph{measurements}.
\item $\Sg$ is an \emph{abstract simplicial complex} over $X$, \ie~a family of finite subsets of $X$ such that $C \subseteq C' \in \Sg$ implies $C \in \Sg$, and $\bigcup \Sg = X$. We refer to the elements of $\Sg$ as \emph{contexts}. They specify which measurements can be performed together. The downward closure property is evident from this interpretation.
\item $O = \{ O_x \}_{x \in X}$ specifies a set of \emph{outcomes} for each measurement.
\end{itemize}

A minor variant of this definition is often used, where instead of $\Sg$ we specify the set $\MM$ of maximal elements of $\Sg$, referred to as the \emph{measurement cover}. If $X$ is finite, as is commonly the case, this is equivalent to the above formulation.

Given a measurement scenario $(X, \Sg, O)$, 
the \emph{event sheaf}  $\ES : P^{\op} \to \Set$, where $P = (\Pow(X),{\subseteq})$,
is defined by $\ES(U) = \prod_{x \in U} O_x$, with the obvious action by projection: $\rho^{V}_{U} : \ES(V) \to \ES(U)$ when $U \subseteq V$.

\textbf{Notation} if $U \subseteq V$ and $s \in \ES(V)$, we write $s |_U := \rho^{V}_{U}(s)$. 

The elements of $\ES(U)$ are referred to as \emph{local sections} over $U$. They specify an outcome for each measurement $x 
\in U$. The sheaf property for $\ES$ is easily verified: given any compatible family of local sections $\{ s_{C} \}_{C \in \Sg}$, where compatibility means that $s_{C} = s_{C'} |_{C}$ when $C \subseteq C'$, there exists a unique global section $s \in \ES(X)$ such that $s |_{C} = s_{C}$ for each $C \in \Sg$. Thus at the deterministic level of events, we can glue locally consistent descriptions into a single globally consistent description.

Compatibility can be defined equivalently in terms of the cover $\MM$: a family $\{ s_{C} \}_{C \in \MM}$ is compatible if for all $C, C' \in \MM$, and $s \in \ES(C)$, $s' \in \ES(C')$, $s |_{C \cap C'} = s' |_{C \cap C'}$.

Contextuality arises when we pass from deterministic to indeterministic families of events.
To capture this, we introduce the \emph{distribution monad} $\Dist_{R}$, parameterised on a commutative semiring $R$.
We recall firstly that a commutative semiring $(R, +, 0, \cdot, 1)$ differs from a commutative ring  in that the additive part $(R, +, 0)$ is only required to be a commutative monoid rather than an abelian group. Thus an alternative name for semirings is ``rigs'' --- rings without negatives. Standard examples are the non-negative reals $\Real_{\geq 0}$, and the Boolean semiring $\{ 0,1\}$ with addition given by disjunction.

Given a semiring $R$, we define the functor $\Dist_{R} : \Set \to \Set$ by 
\[ \Dist_R(X) = \{ d : X \to R \mid \supp(d) \; \text{finite}, \; \sum_{x \in X} d(x) = 1 \} . \]
Here $\supp(d) := \{ x \in X \mid d(x) \neq 0 \}$ is the support of $d$.
Given $f : X \to Y$, $\Dist_{R}(f) : \Dist_{R}(X) \to \Dist_{R}(Y)$ is defined by $\Dist_{R}(f)(d)(y) = \sum_{f(x) = y} d(x)$, the push-forward of $d$ along $f$.
This functor extends to a monad on $\Set$, with unit $\eta_X : X \to \Dist_{R}(X)$ given by the Dirac delta distribution $\eta_X := x \mapsto \delta_x$. The monad multiplication is given by
$\mu_X : \Dist_{R} (\Dist_{R}(X)) \to \Dist_{R}(X)$, where $\mu_X(D)(x) := \sum_{d \in \Dist_{R}(X)} D(d) \cdot d(x)$.  This is well-defined by finiteness of the support of $D$.

A noteworthy point, to which we shall return, is that the multiplication of the semiring is only used in defining the monad multiplication. The underlying functor and the unit use only the additive structure.

We now define the central notion of the sheaf-theoretic approach, that of \emph{empirical model}.
An empirical model $e$ is a compatible family on the cover $\Sg$ over the presheaf $\Dist_{R} \ES : P^{\op} \to \Set$. Explicitly, $e = \{ e_{C} \}_{C \in \Sg}$, where $e_{C} \in \Dist_{R} (\ES(C))$. Compatibility is, as before: $e_{C} = e_{C'} |_{C}$ whenever $C \subseteq C'$. Note that restriction in the presheaf $\Dist_{R} \ES$ is \emph{marginalization}. Compatibility has a clear physical meaning, as a generalized No-Signalling or No-Disturbance property \cite{AbramskyBrandenburger2011}.
In functorial terms, a compatible family is a natural transformation $\One \natarrow \Dist_{R} \ES_{\Sg}$, where $\ES_{\Sg}$ is the restriction of $\ES$ to $\Sg$, and $\One$ is the constant presheaf on $\Sg$ with value a singleton.

Now contextuality can be defined as a property of empirical models. An empirical model $e$ is \emph{contextual} if there is no global section $d \in \Dist_{R}(\ES(X))$ such that $d |_{C} = e_{C}$ for all $C \in \Sg$. Thus contextual empirical models are witnesses to the failure of the sheaf property for the presheaf $\Dist_{R} \ES$. Otherwise put, they  witness  the failure of \emph{global consistency} of a family of \emph{locally consistent} data.

This notion of contextuality can be refined to a hierarchy of ``strengths'' of contextuality \cite{AbramskyBrandenburger2011}. It can be shown to give a unified account of a very wide range of phenomena in contextuality and non-locality, including probabilistic contextuality as in Bell's theorem, possibilistic arguments such as the Hardy paradox and the GHZ construction, All-versus-Nothing arguments, state-independent contextuality such as Kochen-Specker paradoxes and the Peres-Mermin magic square, etc. \cite{AbramskyBrandenburger2011,DBLP:conf/csl/AbramskyBKLM15}.

Further developments include quantifying contextuality using the contextual fraction \cite{AbramskyBarbosaMansfield2017}, developing a resource theory of contextuality \cite{AbramskyBarbosaMansfield2017,abramsky2019comonadic}, and giving cohomological characterizations of contextuality as obstructions to the existence of global sections \cite{DBLP:journals/corr/abs-1111-3620,DBLP:conf/csl/AbramskyBKLM15}.

\subsection{Effect algebras}

Effect algebras have been widely studied in quantum foundations \cite{FoulisBennett1994}.
They form an algebraic abstraction of quantum effects and unsharp measurements, which also includes sharp measurements and classical probability as special cases.
They are closely related through their linear representations to Generalized Probabilistic Theories (GPT's) \cite{barrett2007information}, and to the study of contextuality arising from unsharp measurements in the operational framework \cite{spekkens2005contextuality}.

\begin{definition}
An \emph{effect algebra} is a set $A$ equipped with a partial binary operation $\oplus : \orth \to A$ where $\orth \subseteq A^2$, and   elements $0, 1 \in A$, satisfying:
\begin{itemize}
\item If $a \orth b$ then  $b \orth a$, and  $a \oplus b = b \oplus a$.
\item $a \orth 0$ and $a \oplus 0 = a$ for all $a$.
\item If $a \orth b$ and $(a \oplus b) \orth c$, then $b \orth c$, $a \orth  (b \oplus c)$, and $(a \oplus b)\oplus c = a \oplus (b \oplus c)$.
\item For every $a$ there exists a unique element $a^\perp$ such that $a \oplus a^\perp = 1$.
\item If $a \orth 1$ , then $a=0$.
\end{itemize}
\end{definition}
The operation $a \mapsto a^\perp$ is \emph{orthosupplementation}.

Some immediate consequences of these definitions are that $1 = 0^\perp$, and that $a^{\perp\perp} = a$.
Also, the cancellation law holds: if $a \orth b$, $a \orth c$, and $a \oplus b = a \oplus c$, then $b =c$. Hence if we define $a \leq b \equiv \exists c. \, a \oplus c = b$, then this forms a partial order on $A$, with $0$ as the least element, and $1$ the greatest element. Furthermore, $a \orth b$ iff $a \leq b^{\perp}$, and if $a \leq a' \orth b$, then $a \orth b$.

Given a finite set $S \subseteq A$, we define $\bigoplus_{a \in S} a$ as follows:
\begin{itemize}
\item Choose an ordering $S = \{ a_1, \ldots , a_n \}$.
\item Define $s_0 := 0$, $s_{i+1} := s_i \oplus a_{i+1}$ if $s_i$ defined and $s_i \orth a_{i+1}$, undefined otherwise.
\end{itemize}
Then $\bigoplus_{a \in S} a = s_n$ if $s_n$ is defined, and undefined otherwise.
\begin{proposition}
The value of $\bigoplus_{a \in S} a$ is independent of the order chosen on $S$.
\end{proposition}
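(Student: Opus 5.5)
Since every permutation of $\{1,\ldots,n\}$ is a composite of adjacent transpositions, it suffices to prove the following. If the sequential sum $s_n$ for the ordering $a_1,\ldots,a_n$ is defined, then the sequential sum for the ordering obtained by swapping $a_{i+1}$ and $a_{i+2}$ is also defined and equal to $s_n$. Applying this repeatedly gives order-independence of the value. It also shows that definedness is independent of the ordering: if the sum were defined for one ordering and undefined for another, we could transport definedness along a chain of adjacent transpositions and reach a contradiction. So ``undefined'' is likewise independent of the order.

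For the swap, write $s_i$ for the common prefix sum, which is identical in both orderings, and set $b = a_{i+1}$, $c = a_{i+2}$. In the original ordering we have $s_i \orth b$ and $(s_i \oplus b) \orth c$, and $s_{i+2} = (s_i \oplus b)\oplus c$.

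By the associativity axiom, $b \orth c$ and $s_i \orth (b\oplus c)$, with $s_{i+2} = s_i \oplus (b \oplus c)$. By commutativity, $b \oplus c = c \oplus b$, so $s_{i+2} = s_i \oplus (c \oplus b)$.

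We now need the converse form of associativity: if $c \orth b$ and $s_i \orth (c \oplus b)$, then $s_i \orth c$, $(s_i \oplus c) \orth b$, and $(s_i\oplus c)\oplus b = s_i \oplus (c\oplus b)$. This follows from the stated axiom together with commutativity. Rewrite $s_i \oplus (c\oplus b)$ as $(c \oplus b)\oplus s_i$ and apply the associativity axiom to the triple $c,b,s_i$. This gives $b \orth s_i$ and $c \orth (b\oplus s_i)$, with $(c\oplus b)\oplus s_i = c\oplus(b\oplus s_i)$. Using commutativity, this says $(s_i \oplus b)\oplus c$ in a different bracketing, which is not yet what we want. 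Instead, apply the axiom to the triple $b,c,s_i$, starting from $(b\oplus c)\orth s_i$. This yields $c \orth s_i$ and $b \orth (c \oplus s_i)$, with $(b\oplus c)\oplus s_i = b\oplus(c\oplus s_i)$. Commuting each sum gives $s_i \orth c$, $(s_i\oplus c)\orth b$, and $(s_i \oplus c)\oplus b = s_i\oplus(c\oplus b) = s_{i+2}$, as required.

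Finally, since the partial sums after position $i+2$ coincide in the two orderings and the remaining elements are added in the same order, $s_n$ is unchanged.

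The only delicate step is this bookkeeping of definedness when re-bracketing with a partial operation. It is handled by choosing the right triple to feed into the associativity axiom, and by using commutativity to move between $x\oplus y$ and $y\oplus x$ freely, including in the orthogonality conditions.
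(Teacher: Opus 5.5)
Your proof is correct and is essentially the paper's argument written out in full: the paper simply says the result ``follows from the partial commutative monoid laws,'' and your reduction to adjacent transpositions is the standard way to carry that out. The key step, applying the associativity axiom to the triple $b, c, s_i$ and then using commutativity, handles the swap correctly, and you also show that definedness, not just the value, is order-independent.
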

\begin{proof}
This follows from the partial commutative monoid laws.
\end{proof}
We then define an equation $\bigoplus_{a \in S} a = b$ to mean that $\bigoplus_{a \in S} a$ is defined and equal to $b$.
This extends to families $\{ a_i \}_{i \in I}$, which may be infinite as long as $\{ i \in I \mid a_i \neq 0 \}$ is finite.

Key examples of effect algebras:
\begin{itemize}
\item The unit interval $[0, 1]$, with $r \orth s \equiv r+s \leq 1$. This is the locus for classical probability. Note that the partial order in $[0,1]$ forms a complete bounded lattice.
More generally, we have powers $[0,1]^X$,  which are also lattice-ordered.
\item  Boolean algebras are effect algebras, with $a \orth b \, \equiv \, a \wedge b = 0$, $a \oplus b = a \vee b$, and $a^\perp = \neg a$. 
Note that disjunction is only defined in the disjoint case.
\item Given a Hilbert space $\HH$, we write $\Lop^S(\HH)$ for the positive 
operators on $\HH$ with eigenvalues in $S \subseteq [0, \infty)$. The \emph{quantum effects} are $\Eff(\HH) := \Lop^{[0,1]}(\HH)$, and form an effect algebra, with $P \orth Q$ iff $P + Q \in \Lop^{[0,1]}(\HH)$.
This is the key motivating example from quantum theory. Note that 
\[ \Eff(\HH) = \{ P \in \Lop^{[0, \infty]}(\HH) \mid \mathbf 0 \leq P \leq I \} , \]
where the ordering on positive operators is defined by $P \leq Q \equiv Q - P \; \text{positive}$.
\item The \emph{projections} on $\HH$ are $\Proj(\HH) := \Lop^{\{ 0,1 \}}(\HH)$. These are the idempotent positive operators. They form a sub-effect algebra of $\Eff(\HH)$, with 
\[ P \orth Q \; \iff \; PQ = \Zero \; \iff \; P + Q \in \Proj(\HH) \; \iff \; P + Q \in \Eff(\HH). \]
Physically, they correspond to \emph{sharp} measurements, while general effects allow for \emph{unsharp} measurements.
\end{itemize}

A morphism $f : A \to B$ of effect algebras is a function $f$ such that 
\begin{itemize}
\item $f(1) = 1$.
\item If $a \orth b$ then $f(a) \orth f(b)$, and $f(a \oplus b) = f(a) \oplus f(b)$.
\end{itemize}
It follows automatically that morphisms preserve $0$ and orthosupplementation.

We obtain a category $\EA$ of effect algebras and morphisms.

\subsubsection{Effect modules}

We are interested in effect algebras with additional structure. 

\begin{definition}
An \emph{effect module} (or \emph{convex effect algebra}) is an effect algebra $A$ equipped with a scalar multiplication
\[
[0,1] \times A \to A, \qquad (r,a)\mapsto r\cdot a,
\]
such that for all $r,s\in[0,1]$ and $a,b\in A$:
\begin{itemize}
\item $1\cdot a = a$, \quad $0\cdot a = 0$,
\item $(rs)\cdot a = r\cdot (s\cdot a)$,
\item if $a \orth b$, then
\[
r\cdot(a \oplus b) = (r\cdot a)\oplus (r\cdot b),
\]
\item if $r+s \leq 1$, then
\[
(r+s)\cdot a = (r\cdot a)\oplus (s\cdot a).
\]
\end{itemize}
\end{definition}

Effect modules provide exactly the structure needed to interpret convex combinations internally.

The effect algebras $[0,1]$ and $\Eff(\HH)$ equipped with the obvious notions of scalar multiplication are effect modules. By contrast, $\{ 0,1 \}$ and $\Proj(\HH)$ are not convex. For finite-dimensional $\HH$, $\Eff(\HH)$ is the convex closure of $\Proj(\HH)$. This extends in a limiting sense  to the infinite-dimensional case \cite{heinosaari2011mathematical}.

Morphisms of effect modules are effect algebra morphisms which preserve the scalar multiplication. This yields a category $\EM$.

\subsubsection{States}

A \emph{state} of an effect algebra $A$ is an effect algebra morphism $\sg : A \to [0,1]$.

In the case of the effect algebra of projections, we have the following refined version of Gleason's theorem \cite{gleason,dvurecenskij1993gleason}:
\begin{theorem}
\label{th:Gleason}
If $\dim(\HH) > 2$, there is a bijective correspondence between states $\sg$ of $\Proj(\HH)$ and density operators $\rho$ on $\HH$, via the Born rule:
\[ \sg(P) = \Tr(P \rho) . 
\]
\end{theorem}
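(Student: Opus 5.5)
Gleason's theorem is a deep result, so what follows is the standard route, organized by which direction is easy and which is hard.

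\textbf{The easy direction.} Given a density operator $\rho$, set $\sg_\rho(P) := \Tr(P\rho)$. Then:
\begin{itemize}
\item $\Tr(I\rho)=1$, so $\sg_\rho(1)=1$;
\item for $P \orth Q$ we have $PQ=\Zero$ and $P\oplus Q = P+Q$, so linearity of the trace gives additivity;
\item $0 \leq \Tr(P\rho) \leq 1$ holds because $\rho$ is positive with unit trace and $0\leq P\leq I$.
\end{itemize}
So $\sg_\rho$ is a state of $\Proj(\HH)$. Injectivity of $\rho \mapsto \sg_\rho$ holds because the rank-one projections $P_v = \ket{v}\bra{v}$ determine $\rho$. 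Indeed $\bra{v}\rho\ket{v} = \Tr(P_v\rho)$, and a self-adjoint (or even arbitrary) operator is determined by its quadratic form via polarization. This direction works in any dimension.

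\textbf{Surjectivity: reduction to frame functions.} Given a state $\sg$, define $f(v) := \sg(P_v)$ on the unit sphere. For any orthonormal basis $\{e_i\}$, the projections $P_{e_i}$ are pairwise orthogonal and sum to $I$. In the infinite-dimensional case I would read $\sg$ as completely additive, as in Dvure\v{c}enskij's formulation; otherwise states need not be normal and the correspondence must be with normal states. Under that reading, $\sum_i f(e_i) = 1$, so $f$ is a non-negative frame function of weight $1$. Also $\sg(P) = \sum_j f(u_j)$ for any orthonormal basis $\{u_j\}$ of $\mathrm{ran}\,P$, so $\sg$ is determined by $f$. It then suffices to show that $f(v) = \bra{v}\rho\ket{v}$ for some positive trace-one $\rho$. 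Positivity and trace one follow automatically from $f\geq 0$ and weight $1$.

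\textbf{The main obstacle: Gleason's analysis in $\Real^3$.} The core step is the classical lemma that every bounded non-negative frame function on $\Real^3$ is a quadratic form. I would follow Cooke--Keane--Moran's elementary proof.
\begin{itemize}
\item Show $f$ is continuous, starting from the observation that $f$ restricted to great circles behaves additively on orthogonal pairs.
\item Use the ``great circle'' lemma: if $f$ attains values near its infimum at a point, it stays close on nearby points. This comes from building chains of orthogonal triads.
\item Continuity plus the frame condition then forces $f$ to lie in the span of degree-$2$ spherical harmonics plus a constant. This spans exactly the quadratic forms restricted to the sphere.
\end{itemize}
This is the technically hard part and the reason $\dim > 2$ is needed. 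In dimension $2$ any function with $f(v)+f(v^\perp)=1$ is a frame function, and most are not quadratic.

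\textbf{Extension to general $\HH$.}
\begin{itemize}
\item Pass from real three-dimensional subspaces to complex ones, since a complex $3$-space contains real $3$-subspaces on which $f$ restricts to a real frame function.
\item Deduce that on every finite-dimensional subspace of dimension at least $3$, $f$ agrees with a unique quadratic form.
\item These forms are compatible on intersections, so they glue to a sesquilinear form $B$ on $\HH$ with $B(v,v) = f(v) \leq 1$.
\item $B$ is bounded, hence $B(u,v) = \bra{u}\rho\ket{v}$ for a bounded positive $\rho$.
\item Complete additivity gives $\Tr\rho = \sum_i f(e_i) = 1$, so $\rho$ is a density operator.
\end{itemize}
Finally, $\sg(P) = \Tr(P\rho)$ for all $P$ by summing over an orthonormal basis of $\mathrm{ran}\,P$.
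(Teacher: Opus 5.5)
The paper gives no proof of this theorem. It is quoted from Gleason and Dvure\v{c}enskij, and your outline is the standard argument behind those citations: reduction to frame functions, the $\Real^3$ lemma (Cooke--Keane--Moran continuity plus the spherical-harmonic argument), then extension to complex and infinite-dimensional $\HH$. It is sound as an outline.

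What your reconstruction adds over the citation is that it surfaces a hypothesis the paper's statement needs but does not state. Your complete-additivity reading is necessary, not merely cautious. The paper defines a state as a finitely additive effect-algebra morphism, and with that definition the theorem fails whenever $\dim\HH=\infty$:
\begin{itemize}
\item Compose any state of the Calkin algebra with the quotient map $B(\HH)\to B(\HH)/K(\HH)$. This gives a state of $B(\HH)$.
\item Its restriction to $\Proj(\HH)$ is a state in the paper's sense, and it vanishes on every rank-one projection.
\item But $\Tr(P_v\rho)=\langle v,\rho v\rangle=0$ for all unit $v$ would force $\rho=0$, so no density operator represents it.
\end{itemize}
Under your reading you should also check, in the easy direction, that $P\mapsto\Tr(P\rho)$ is completely additive. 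This holds because $\rho$ is trace class.

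The one step you compress more than is safe is the passage from real to complex subspaces. Knowing that $f$ is a real quadratic form on every totally real $3$-space does not yet show that these forms assemble into one Hermitian form. In coordinates, you must show that the off-diagonal coefficient of $f$ on $\mathrm{span}_{\Real}\{e_j,e^{i\theta}e_k\}$ equals $\mathrm{Re}(e^{i\theta}t_{jk})$ for a single complex number $t_{jk}$. This follows in three steps:
\begin{itemize}
\item Each totally real plane in $\mathrm{span}_{\mathbb{C}}\{e_j,e_k\}$ extends by a third orthogonal unit vector to a totally real $3$-space. This is where $\dim\HH\geq 3$ is used again.
\item These planes correspond exactly to the great circles of the Bloch sphere.
\item A function on $S^2$ that agrees on every great circle with an affine function of $\Real^3$ is itself the restriction of an affine function.
\end{itemize}
Hence $f$ is Hermitian on each complex $2$-plane. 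From there the gluing you describe (to $3$-dimensional subspaces, then to finite-dimensional subspaces, then to a bounded positive form on $\HH$) goes through.
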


For quantum effects, there is a stronger version due  to 
Busch \cite{Busch2003Gleason}.
\begin{theorem}
\label{th:Busch}
If $\dim(\HH) > 1$, there is a bijective correspondence between states $\sg$ of $\Eff(\HH)$ and density operators $\rho$ on $\HH$, via the Born rule:
\[ \sg(P) = \Tr(P \rho) . 
\]
\end{theorem}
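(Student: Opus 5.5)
The plan is to reduce Busch's theorem to linear algebra. The route is to show that every state $\sg$ of $\Eff(\HH)$ extends to a positive, normalized linear functional on the bounded self-adjoint operators, and then to invoke the representation of (normal) such functionals by density operators. One direction is immediate. Given a density operator $\rho$, the map $P \mapsto \Tr(P\rho)$ sends $\Eff(\HH)$ into $[0,1]$, since $0 \leq P \leq I$ gives $0 \leq \Tr(P\rho) \leq \Tr(\rho) = 1$. It sends $I$ to $1$ and is additive on orthogonal pairs, so it is a state. Injectivity of $\rho \mapsto \sg_\rho$ holds because the rank-one projections $|\psi\rangle\langle\psi|$ lie in $\Eff(\HH)$ and $\langle\psi|\rho|\psi\rangle$ for all $\psi$ determines $\rho$.

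For surjectivity, I would fix a state $\sg$ and carry out the following steps in order.
\begin{enumerate}
\item Prove that $\sg$ is monotone: $P \leq Q$ in $\Eff(\HH)$ means $Q = P \oplus (Q-P)$, so $\sg(Q) = \sg(P) + \sg(Q-P) \geq \sg(P)$.
\item Prove rational homogeneity. Writing $P = \bigoplus_{i=1}^n \frac1n P$ gives $\sg(\frac1n P) = \frac1n \sg(P)$, and then $\sg(\frac mn P) = \frac mn \sg(P)$.
\item Pass from rationals to all $r \in [0,1]$ by monotonicity: squeeze $r$ between rationals $q \leq r \leq q'$, so that $q\,\sg(P) \leq \sg(rP) \leq q'\,\sg(P)$.
\item Extend $\sg$ to the positive bounded operators by $\sg(T) := \lambda\, \sg(T/\lambda)$ for any $\lambda \geq \|T\|$. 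This is well defined by homogeneity and additive on all positive operators, since for large $\lambda$ the operators $S/\lambda$ and $T/\lambda$ are orthogonal.
\item Extend to the self-adjoint operators via $T = T_+ - T_-$. Well-definedness is the standard Grothendieck-style argument: $A - B = C - D$ implies $A + D = C + B$. Then complexify to all bounded operators.
\end{enumerate}
This yields a positive linear functional with $\sg(I) = 1$.

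When $\dim \HH < \infty$, the Riesz representation with respect to the Hilbert--Schmidt inner product then gives a unique $\rho$ with $\sg(T) = \Tr(T\rho)$. Positivity of $\sg$ makes $\rho \geq 0$, and $\sg(I) = 1$ makes $\Tr\rho = 1$. It is worth noting that the hypothesis $\dim(\HH) > 1$ plays no essential role in this argument. The argument even works in dimension $1$, where $\Eff(\HH) = [0,1]$ and its only state is the identity. This contrasts with Theorem~\ref{th:Gleason}, where the restriction to $\Proj(\HH)$ is what makes low dimensions problematic.

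The main obstacle is the infinite-dimensional case, where a positive linear functional need not be normal, so it need not be of the form $\Tr(\cdot\,\rho)$. There I would use the $\sigma$-additive reading of states implicit in the cited result: states preserve countable orthogonal sums. Under that reading, I would show that $\sg$ is completely additive on orthogonal families of rank-one projections. This gives normality, and the standard characterization of normal states on $B(\HH)$ then produces the density operator. If only finite additivity is available, I would restrict the claim to finite dimension, or to normal states. I expect this to be the point where the precise hypotheses of \cite{Busch2003Gleason} must be matched carefully.
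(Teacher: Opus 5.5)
The paper does not prove this theorem; it only cites Busch. Your argument is a correct, self-contained reconstruction of Busch's proof. Monotonicity, rational and then real homogeneity, and extension first to positive and then to self-adjoint operators turn a state of $\Eff(\mathcal{H})$ into a positive unital functional on $B(\mathcal{H})$. Trace duality then settles the finite-dimensional case, and you are right that $\dim(\mathcal{H})>1$ plays no role.

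What needs sharpening is your hedge about infinite dimensions. Every state of $B(\mathcal{H})$ restricts to an effect-algebra morphism $\Eff(\mathcal{H})\to[0,1]$. So your steps 1--5 actually give a bijection between states of $\Eff(\mathcal{H})$ in the paper's sense (which are only finitely additive) and \emph{all} states of $B(\mathcal{H})$. Under this bijection, density operators correspond exactly to the normal states.

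Hence for infinite-dimensional $\mathcal{H}$ the statement as written is false, not merely delicate. Take a Banach limit of the vector states $T\mapsto\langle e_n, T e_n\rangle$ along an orthonormal sequence $(e_n)$. It vanishes on every rank-one projection, so its restriction to $\Eff(\mathcal{H})$ is a state that cannot equal $\mathrm{Tr}(\,\cdot\,\rho)$ for any density operator $\rho$.

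So the theorem holds as stated exactly when $\dim(\mathcal{H})<\infty$. In infinite dimensions you must add Busch's $\sigma$-additivity hypothesis. Your normality step, as phrased, also uses separability of $\mathcal{H}$: that is what makes orthogonal families of rank-one projections countable, so that $\sigma$-additivity gives the complete additivity you invoke.
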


\begin{remark}
Unlike the projection lattice, the effect algebra $\Eff(\HH)$
admits no dispersion-free states. The presence of convex structure forces
all states to be given by density operators, even in dimension $2$.
\end{remark}

\section{Effect-valued distributions}
\label{sec:eff-valueddists}

We now come to a key point in our development. Rather than regarding effect algebras or modules as primary structures in their own right, we shall view them as \emph{weights} for distributions, generalising the $R$-valued weights for semirings $R$ which we encountered in Section~\ref{sheaf:subsec}.

Given an effect algebra $A$, we define a functor $\Dist_A : \Set \to \Set$ in exactly the same manner as we did for $\Dist_R$. Explicitly, given a set $X$, we define
\[
\Dist_A(X)=\left\{d:X\to A \text{ with finite support} \;\middle|\; \bigoplus_{x\in X} d(x)=1\right\}.
\]
Given $f:X\to Y$, define $\Dist_A(f)$ by pushforward:
\[
(\Dist_A(f)(d))(y)=\bigoplus_{x\in f^{-1}(y)} d(x).
\]
Using exactly the same arguments as for $\Dist_R$, it is routine to show:
\begin{proposition}
$\Dist_A : \Set \to \Set$ is a functor.
\end{proposition}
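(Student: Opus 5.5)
To prove that $\Dist_A$ is a functor, three things need checking: (i) for $d \in \Dist_A(X)$ and $f : X \to Y$, the pushforward $\Dist_A(f)(d)$ is well-defined, finitely supported and sums to $1$; (ii) $\Dist_A(\mathrm{id}_X) = \mathrm{id}$; (iii) $\Dist_A(g \circ f) = \Dist_A(g) \circ \Dist_A(f)$. Throughout, the only tool is the partial commutative monoid structure of $\oplus$, together with the order-independence of finite sums stated in the Proposition above. The key lemma is a \emph{generalized associativity} statement. Let $S$ be a finite index set partitioned as $S = \biguplus_{j} S_j$, and suppose $\bigoplus_{i\in S} a_i$ is defined. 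Then each partial sum $b_j := \bigoplus_{i \in S_j} a_i$ is defined, $\bigoplus_j b_j$ is defined, and $\bigoplus_j b_j = \bigoplus_{i\in S} a_i$.

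To prove the lemma, I would first use order-independence to enumerate $S$ block by block, listing $S_1$, then $S_2$, and so on. The sequential sum is then defined by hypothesis. I would then argue by induction on the length of the sequence, using the associativity axiom: if $a \orth b$ and $(a\oplus b)\orth c$, then $b \orth c$, $a \orth (b\oplus c)$, and the two bracketings agree. Applied repeatedly, this regroups a left-bracketed sum $(\cdots((s \oplus a_{k}) \oplus a_{k+1}) \cdots)$ into $s \oplus (a_k \oplus a_{k+1} \oplus \cdots)$, and it also shows that every initial segment of a defined sum is defined. Since the blocks are consecutive, this yields definedness of each $b_j$ and equality of the two totals. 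Sums with zero entries cause no trouble, because $a \oplus 0 = a$. This is also what allows infinite families with finite support.

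With the lemma in hand, (i) follows directly. The support of $\Dist_A(f)(d)$ lies in $f(\supp d)$, which is finite, and the fibres $f^{-1}(y)\cap \supp d$ partition $\supp d$. Since $\bigoplus_x d(x) = 1$ is defined, the lemma shows that every fibre sum is defined and that their total is $1$. For (ii), each fibre of the identity is a singleton, and a one-element sum is $0 \oplus d(x) = d(x)$. For (iii), for $z \in Z$ the set $(g\circ f)^{-1}(z)$ is partitioned by the fibres $f^{-1}(y)$ with $y \in g^{-1}(z)$. Applying the lemma to the defined sum over $(g \circ f)^{-1}(z)$ gives
\[
\bigoplus_{x \in (gf)^{-1}(z)} d(x) = \bigoplus_{y \in g^{-1}(z)} \bigoplus_{x\in f^{-1}(y)} d(x),
\]
which is exactly $\Dist_A(g)(\Dist_A(f)(d))(z)$. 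Definedness of the left-hand side is available here: it is a sub-sum of the total sum $1$, and the lemma, applied with the partition $S = (gf)^{-1}(z) \uplus$ (complement), shows that sub-sums of defined sums are defined.

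The main obstacle is the definedness half of the generalized associativity lemma. In a semiring all sums exist, so the classical argument never has to address this. Here one must show carefully that definedness of the whole sum propagates both to the block sums and to the regrouped sum. I would handle this with a clean induction on the number of summands, first proving the two-block case $S = S_1 \uplus S_2$ and then inducting on the number of blocks.
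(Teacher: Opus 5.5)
Your proof is correct and is essentially the routine verification the paper has in mind. The paper simply asserts that the $\Dist_R$ argument carries over: pushforward preserves normalization and identities, and respects composition by regrouping sums over fibres. Your generalized associativity lemma, which treats definedness of the block sums and of the regrouped sum carefully, supplies exactly the partial-monoid detail that this appeal to the $\Dist_R$ case leaves unstated.
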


Just as for $\Dist_R$, there is a natural unit (Dirac map)
\[
\eta_X^A : X \to \Dist_A(X),
\qquad
\eta_X^A(x)(x') =
\begin{cases}
1 & x=x' \\
0 & \text{otherwise}.
\end{cases}
\]

We note that this construction was studied previously by Jacobs \cite{jacobs2011probabilities}, in order to define an adjunction between effect algebras and an abstract notion of convex functors.

There is an interesting comparison with the work of Uijlen and Staton \cite{staton2018effect}. They take effect algebras as generalizations of the event spaces of classical probability, while still taking the probability weights in $[0,1]$, as in classical probability. By contrast, we use ``locally classical'' event spaces,  as controlled by the event sheaf, while generalizing to effect-valued probability weights.

We examine what this construction produces for the  key examples of effect algebras:
\begin{itemize}
\item It is easy to see that $\Dist_{[0,1]}$ is isomorphic to the standard probability distribution functor $\Dist_{\Real_{\geq 0}}$.
\item Note that distributions in $\Dist_{\{ 0,1 \}}$ assign $1$ to exactly one argument, and hence correspond to deterministic assignments. Thus $\eta_X^{\{0,1\}} : X \xrightarrow{\cong} \Dist_{\{0,1\}}(X)$ is an isomorphism. 
\item For the quantum effects $\Eff(\HH)$, $\Dist_{\Eff(\HH)}(X)$ gives the discrete \povm's over $\HH$ with outcomes in $X$. 
\item For the projections $\Proj(\HH)$, $\Dist_{\Proj(\HH)}(X)$ gives the discrete \pvm's over $\HH$ with outcomes in $X$.
\end{itemize}

\subsection{Failure of monad structure}
\label{subsec:failmonad}

In general, the functor $\Dist_A$ does \emph{not} admit a monad structure. The obstruction arises in defining the multiplication
\[
\mu_X : \Dist_A(\Dist_A(X)) \to \Dist_A(X),
\]
which would ``flatten'' distributions of distributions.

A natural candidate would be
\[
\mu_X(\Phi)(x)
=
\bigoplus_{d \in \Dist_A(X)} \Phi(d) \cdot d(x),
\]
but this requires a multiplication
\[
\_ \cdot \_ :  A \times A \to A,
\]
which is not part of the structure of a general effect algebra or effect module.

Thus additional structure is required. \emph{Effect
monoids} \cite{jacobs2011probabilities} give one such setting: they are effect algebras equipped with
a multiplicative monoid operation distributing over the partial addition.

An example is provided by the effect algebra $[0,1]$, which is an effect monoid under multiplication, and hence gives rise to a monad on $\Set$ extending $\Dist_{[0,1]}$.
This is easily seen to be isomorphic to the usual probability distribution monad.

\subsection{Action of the distribution monad.}

We now give a more conceptual view of the convex structure on effect modules $A$, as an action of the probability monad $\Dist_{[0,1]}$ on the functor $\Dist_A$.

An \emph{action} of a monad $T$ on a category $\mathcal{C}$ on a functor $F:\mathcal{C}\to\mathcal{C}$ is a natural transformation
\[
\alpha : T \circ F \Rightarrow F
\]
satisfying the unit and associativity laws:
\[
\alpha_X \circ \eta_{F(X)} = \mathrm{id}_{F(X)},
\qquad
\alpha_X \circ T(\alpha_X) = \alpha_X \circ \mu_{F(X)}.
\]
Thus for each object $X$ of $\mathcal{C}$, $\alpha_X$ gives a $T$-algebra structure on $F(X)$.

In the present setting, the convex structure on an effect module $A$ induces such an action
\[
\alpha_X : \Dist_{[0,1]} (\Dist_A(X)) \to \Dist_A(X),
\]
given by convex combination of $A$-valued distributions:
\[
\alpha_X\!\left(\sum_i r_i \delta_{d_i}\right)(x)
=
\bigoplus_i r_i \cdot d_i(x).
\]
\begin{proof}[Verification of the monad action laws]
We verify that the maps
\[
\alpha_X : \Dist_{[0,1]}(\Dist_A(X)) \to \Dist_A(X)
\]
defined by
\[
\alpha_X\!\left(\sum_i r_i \delta_{d_i}\right)(x)
=
\bigoplus_i r_i \cdot d_i(x)
\]
make $\Dist_A$ into a $\Dist_{[0,1]}$-algebra--valued functor.

First, $\alpha_X$ is well-defined. Indeed, for each $x\in X$, the expression
\[
\bigoplus_i r_i \cdot d_i(x)
\]
is defined by convexity of $A$, and summing over $x$ gives
\[
\bigoplus_{x\in X} \alpha_X\!\left(\sum_i r_i \delta_{d_i}\right)(x)
=
\bigoplus_{x\in X} \bigoplus_i r_i \cdot d_i(x)
=
\bigoplus_i r_i \cdot \left(\bigoplus_{x\in X} d_i(x)\right)
=
\bigoplus_i r_i \cdot 1
=
1,
\]
so $\alpha_X(\sum_i r_i \delta_{d_i})$ is again an $A$-distribution.

For the unit law, let $d\in \Dist_A(X)$. Then
\[
\alpha_X(\eta_{\Dist_A(X)}(d))(x)
=
\alpha_X(\delta_d)(x)
=
1 \cdot d(x)
=
d(x),
\]
so
\[
\alpha_X \circ \eta_{\Dist_A(X)} = \mathrm{id}_{\Dist_A(X)}.
\]

For the associativity law, let
\[
\Phi = \sum_j s_j \delta_{\Psi_j} \in \Dist_{[0,1]}(\Dist_{[0,1]}(\Dist_A(X))),
\qquad
\Psi_j = \sum_i r_{ji}\delta_{d_{ji}}.
\]
Then
\[
\alpha_X\bigl(\Dist_{[0,1]}(\alpha_X)(\Phi)\bigr)(x)
=
\bigoplus_j s_j \cdot \alpha_X(\Psi_j)(x)
=
\bigoplus_j s_j \cdot \left(\bigoplus_i r_{ji}\cdot d_{ji}(x)\right).
\]
Using distributivity of scalar multiplication over $\oplus$ and associativity of convex combinations, this becomes
\[
\bigoplus_{j,i} (s_j r_{ji}) \cdot d_{ji}(x).
\]
On the other hand,
\[
\mu_{\Dist_A(X)}(\Phi)
=
\sum_{j,i} s_j r_{ji}\, \delta_{d_{ji}},
\]
and therefore
\[
\alpha_X\bigl(\mu_{\Dist_A(X)}(\Phi)\bigr)(x)
=
\bigoplus_{j,i} (s_j r_{ji}) \cdot d_{ji}(x).
\]
Hence
\[
\alpha_X \circ \Dist_{[0,1]}(\alpha_X)
=
\alpha_X \circ \mu_{\Dist_A(X)}.
\]

Finally, naturality in $X$ is immediate from the fact that both marginalization and convex combination are defined pointwise. 
\end{proof}

Thus for an effect module $A$, each $\Dist_A(X)$ carries a canonical $\Dist_{[0,1]}$-algebra structure, and $\Dist_A$ defines a functor
\[
\Dist_A : \mathbf{Set} \to \mathbf{Alg}(\Dist_{[0,1]}),
\]
into the category of convex sets. In particular, convex combinations of $A$-valued distributions are well-defined and natural in $X$.


\subsection{The embedding theorem for effect-valued distributions}

Given a morphism $h : A \to B$ of effect algebras, we can define a natural transformation $\Dist_h : \Dist_A \natarrow \Dist_B$ by post-composition: for $d \in \Dist_A(X)$, $\Dist_{h,X}(d) = h \circ d \in \Dist_B(X)$. Naturality is easily verified.

\begin{theorem}
The assignment
\[
A \longmapsto \Dist_A,\qquad h\longmapsto \Dist_{h}
\]
defines a full and faithful functor
\[
\mathbf{EA}\longrightarrow [\mathbf{Set},\mathbf{Set}].
\]
\end{theorem}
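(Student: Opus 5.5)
Functoriality is immediate: $\Dist_{\mathrm{id}_A}$ is post-composition with the identity, and $\Dist_{k\circ h} = \Dist_k \circ \Dist_h$ because post-composition is associative. We must also check that $h\circ d$ lies in $\Dist_B(X)$. Since $h$ preserves defined finite sums $\bigoplus$ (by induction on the ordering used to define $\bigoplus$) and preserves $1$, we get $\bigoplus_x h(d(x)) = h(1) = 1$. Support stays finite because $h(0)=0$. So $\Dist_h$ is well defined, and naturality follows because pushforward along $f$ uses only $\oplus$, which $h$ preserves. The substance of the theorem is therefore faithfulness and fullness, and both come from probing the functors $\Dist_A$ on the two-element set $2=\{0,1\}$.

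The key observation is that $\Dist_A(2)$ is in bijection with $A$. A distribution $d$ on $2$ satisfies $d(0)\oplus d(1)=1$, so $d(1)=d(0)^\perp$, and $d$ is determined by $a:=d(0)$. Write $d_a$ for the distribution with $d_a(0)=a$ and $d_a(1)=a^\perp$.

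\emph{Faithfulness.} Suppose $\Dist_h=\Dist_k$. Evaluating at $X=2$ on $d_a$ gives $h(a)=h(d_a(0))=k(d_a(0))=k(a)$ for every $a$, so $h=k$.

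\emph{Fullness.} Let $\tau:\Dist_A\natarrow\Dist_B$ be natural, and define $h(a):=\tau_2(d_a)(0)$. We then show in turn that $h$ is an effect algebra morphism and that $\tau=\Dist_h$.

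\begin{enumerate}
\item \emph{Unit.} Naturality with respect to the map $1\to 2$ picking $0$, together with $\Dist_A(1)=\{\ast\mapsto 1\}$, shows that $\tau_2(d_1)=d_1$. Hence $h(1)=1$.
\item \emph{Additivity.} Suppose $a\orth b$. Take $X=3=\{0,1,2\}$ and the distribution $e$ with values $(a,b,(a\oplus b)^\perp)$, and set $\tau_3(e)=(u,v,w)$. Push forward along the map collapsing $\{0,1\}\mapsto 0$, $2\mapsto 1$; this sends $e$ to $d_{a\oplus b}$, so naturality gives $u\oplus v=h(a\oplus b)$. Push forward along $0\mapsto 0$, $\{1,2\}\mapsto 1$; this yields $d_a$, so $u=h(a)$. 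Using $1\mapsto 0$ instead gives $v=h(b)$. Therefore $h(a)\orth h(b)$ and $h(a\oplus b)=h(a)\oplus h(b)$.
\item \emph{Agreement with $\tau$.} For a general $d\in\Dist_A(X)$ and $x\in X$, let $\chi_x:X\to 2$ send $x\mapsto 0$ and everything else to $1$. Then $\Dist_A(\chi_x)(d)=d_{d(x)}$, so naturality gives $\tau_X(d)(x)=\tau_2(d_{d(x)})(0)=h(d(x))$. Hence $\tau=\Dist_h$.
\end{enumerate}

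The main obstacle is the additivity step in fullness. The difficulty is that orthogonality $h(a)\orth h(b)$ must be extracted from naturality alone. The three-point set handles this, because $\tau_3(e)$ is a $B$-distribution, so its entries are jointly summable and $u\oplus v$ is defined. This step also relies on the fact that pushforwards compute partial sums, which is exactly how $\Dist_A(f)$ was defined.
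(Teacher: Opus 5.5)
Your proof is correct and follows the paper's route. Faithfulness comes from probing at $2$, and fullness comes from defining $h$ via $\tau_2$, using the characteristic maps $\chi_x$ to get $\tau_X(d)=h\circ d$, and using a three-outcome distribution with a merging map for additivity. The differences are minor: you obtain $h(1)=1$ from naturality along $1\to 2$ rather than from additivity and orthosupplements, you identify the components of $\tau_3(e)$ through the two collapse maps instead of citing the pointwise form, and you spell out functoriality and why $h(a)$, $h(b)$ are orthogonal, which the paper leaves implicit.
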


\begin{proof}
We show that the assignment $A \mapsto \Dist_A$, $h \mapsto \Dist_{h}$ is full and faithful.

\medskip

\noindent\emph{Faithfulness.}
Let $h,k:A\to B$ be effect-algebra morphisms such that $\Dist_{h} = \Dist_k$. Consider the two-element set $2=\{0,1\}$. Every element of $\Dist_A(2)$ is of the form $(a,a^\perp)$, so $\Dist_A(2)\cong A$ via $a \mapsto (a,a^\perp)$. Then
\[
\Dist_{h,2}(a,a^\perp) = (h(a),h(a^\perp)), \qquad
\Dist_{k,2}(a,a^\perp) = (k(a),k(a^\perp)).
\]
Equality $\Dist_{h,2} = \Dist_{k,2}$ implies $h(a)=k(a)$ for all $a\in A$, hence $h=k$. Thus the functor is faithful.

\medskip

\noindent\emph{Fullness.}
Let $\tau : \Dist_A \Rightarrow \Dist_B$ be a natural transformation. We construct an effect-algebra morphism $h:A\to B$ such that $\tau = \Dist_{h}$.

\smallskip

\noindent\emph{Step 1: Define $h$.}
For $a\in A$, set
\[
h(a) := \pi_0\bigl(\tau_2(a,a^\perp)\bigr),
\]
where $\pi_0$ denotes projection onto the first coordinate. Since $\tau_2(a,a^\perp)\in \Dist_B(2)$, it has the form $(h(a),h(a)^\perp)$.

\smallskip

\noindent\emph{Step 2: Pointwise form of $\tau$.}
Let $X$ be a set and $d\in \Dist_A(X)$. For each $x\in X$, let $\chi_x:X\to 2$ be the characteristic map of $\{x\}$. Then
\[
\Dist_A(\chi_x)(d) = (d(x), \bigoplus_{x' \neq x} d(x')) = (d(x),d(x)^\perp),
\]
using normalization and uniqueness of orthosupplementation.
By naturality,
\[
\tau_2\bigl(\Dist_A(\chi_x)(d)\bigr)
=
\Dist_{B}(\chi_x)\bigl(\tau_X(d)\bigr).
\]
Taking the first coordinate yields
\[
h(d(x)) = \tau_X(d)(x).
\]
Since this holds for all $x\in X$, we obtain
\[
\tau_X(d) = h \circ d.
\]

\smallskip

\noindent\emph{Step 3: $h$ is an effect-algebra morphism.}
We verify that $h$ preserves the effect-algebra structure.

\emph{(i) Preservation of orthogonal sum.}
Let $a \perp b$ in $A$, and set $c=(a\oplus b)^\perp$. Then $(a,b,c)\in \Dist_A(3)$. Let $m:3\to 2$ merge the first two elements. Then
\[
\Dist_A(m)(a,b,c) = (a\oplus b, c).
\]
By naturality,
\[
\tau_2(a\oplus b, c)
=
\Dist_B(m)\bigl(\tau_3(a,b,c)\bigr).
\]
Using the pointwise description,
\[
\tau_3(a,b,c) = (h(a),h(b),h(c)),
\]
so the right-hand side is
\[
(h(a)\oplus h(b),\,h(c)).
\]
The left-hand side is $(h(a\oplus b),\,h(c))$. Hence
\[
h(a\oplus b) = h(a)\oplus h(b).
\]

\emph{(ii) Preservation of orthosupplement.}
From $(a,a^\perp)\in \Dist_A(2)$ and $\tau_2(a,a^\perp)=(h(a),h(a^\perp))\in \Dist_B(2)$, we obtain
\[
h(a^\perp)=h(a)^\perp.
\]

\emph{(iii) Units.}
It follows that $h(1)=1$ and $h(0)=0$.

Thus $h$ is an effect-algebra morphism.

\smallskip

\noindent\emph{Step 4: $\tau = \Dist_{h}$.}
From Step 2, for every set $X$ and $d\in \Dist_A(X)$,
\[
\tau_X(d) = h \circ d = \Dist_{h,X}(d).
\]
Hence $\tau = \Dist_{h}$.

\medskip

Therefore the functor is full and faithful.
\end{proof}

\section{Measurement models}

The central notion in the sheaf-theoretic approach to contextuality is that of \emph{empirical model}, \ie~compatible families on the presheaf $\Dist_R \circ \ES$ over a measurement scenario $(X, \Sg, O)$. Empirical models are so-named because the data they contain is all in principle directly observable, independently of any theory. One can repeatedly perform measurements, observe outcomes, and  record the frequencies. This is essentially the basic methodology of Bell tests. This theory independence leads in turn to various forms of device-independent certification \cite{fyrillas2024certified}.

This operational reading is tightly tied to the use of the probability distribution monad $\Dist_{\Real_{\geq 0}} \cong \Dist_{[0,1]}$. 
However, mathematically we can broaden the notion of empirical model by considering distributions valued in any effect algebra. There are some compelling reasons for doing so:
\begin{itemize}
\item Firstly, as we shall see, this substantially unifies the sheaf-theoretic approach to contextuality with the generalized probabilistic theories approach to quantum foundations.
\item In particular, the notions of \emph{quantum realizability} and \emph{state} are internalised and given a structural account within the sheaf-theoretic framework.
\item This allows new features to appear, as we can distinguish \emph{internal} and \emph{pointwise} versions of several key notions, including contextuality.
\item It also provides a setting for analyzing sharp vs.~unsharp measurements, and their relationship to contextuality, offering a new perspective on the issues discussed in works such as \cite{spekkens2005contextuality,liang2011specker}.
\item  It also overcomes some limitations of studying effect algebras in isolation. As observed in \cite{gudder2010effect}, effect algebras (under mild assumptions) can always be embedded in ``classical'' structures, and so do not offer a complete analysis of quantum non-classicality. Combining effect algebras with the sheaf-theoretic structure gives full scope to capturing non-locality, contextuality, and other non-classical features.
\end{itemize}

We are thus led to define \emph{$A$-measurement models}, or just $A$-models for brevity, over an effect algebra $A$ and a measurement scenario $(X, \Sg, O)$, to be compatible families $m = \{ m_C \}_{C \in \Sg}$, where $m_C \in \Dist_A \ES(C)$, and compatibility is defined as before: when $C \subseteq C' \in \Sg$, $m_C = m_{C'} |_{C}$. The restriction operation is marginalisation over $A$-valued distributions: if $d \in \Dist_A \ES(C')$ and $C \subseteq C'$, $d |_C(s) := \bigoplus_{s' |_C = s} d(s')$.

We shall retain the term empirical models for probabilistic models over the effect algebra $[0,1]$, since it is these models which describe behaviour which is directly observable in the classical world of macroscopic observers. 


As we have seen, distributions valued in the quantum effects algebra $\Eff(\HH)$ correspond to \povm's.
We thus refer to measurement models of this form as \emph{\povm-models}. They give the basic ingredients of a quantum realization for the scenario, without specifying the  probabilities induced by a specific state.
Similarly, we refer to models over the algebra of projections $\Proj(\HH)$ as \emph{\pvm-models}.

\begin{proposition}
\label{prop:jmeas}
Let $\XX = (X, \Sg, O)$ be a measurement scenario, and $m = \{ m_C \in \Dist_{\Eff(\HH)} \ES(C) \}_{C \in \Sg}$ be a family of \povm's. The following are equivalent:
\begin{enumerate}
\item The compatibility conditions hold, so $m$ is a \povm-model.
\item For each context $C$, the \povm's $m_{\{x\}}$, $x \in C$, are jointly measurable by the \povm~$m_C$.
\end{enumerate}
\end{proposition}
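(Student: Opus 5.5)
The plan is to unfold both conditions into statements about marginals of $\Eff(\HH)$-valued distributions, and then show that they say the same thing. Recall the standard meaning: a family of \povm's $\{E_x\}_{x\in C}$, with $E_x$ having outcomes in $O_x$, is \emph{jointly measurable} by a \povm~$G$ on $\ES(C)=\prod_{x\in C}O_x$ when each $E_x$ is a marginal of $G$:
\[ E_x(o)=\bigoplus_{s\in\ES(C),\, s(x)=o} G(s). \]
Since $\oplus$ in $\Eff(\HH)$ is ordinary operator addition (restricted to sums $\le I$), this is exactly the restriction $G|_{\{x\}}$ in the presheaf $\Dist_{\Eff(\HH)}\ES$. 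So condition (2) says $m_C|_{\{x\}}=m_{\{x\}}$ for every $C\in\Sg$ and every $x\in C$.

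\emph{(1)$\Rightarrow$(2).} This direction is immediate. For $x\in C$ we have $\{x\}\subseteq C$ and $\{x\}\in\Sg$ by downward closure, so compatibility gives $m_C|_{\{x\}}=m_{\{x\}}$ directly.

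\emph{(2)$\Rightarrow$(1).} I will read the data as the paper does when it specifies a scenario by its measurement cover $\MM$. The genuine data are the \povm's $m_C$ for maximal $C$, and for $C''\subseteq C$ the component $m_{C''}$ is taken to be the marginal $m_C|_{C''}$. Three steps are then needed.
\begin{enumerate}
\item Marginalisation is functorial, i.e.\ $(d|_{C'})|_{C''}=d|_{C''}$. This follows because $\Dist_{\Eff(\HH)}$ is a functor, by associativity and commutativity of $\oplus$ on finite families (the order-independence proposition above).
\item Well-definedness: if $C''\subseteq C\cap C'$ with $C,C'\in\MM$, we need $m_C|_{C''}=m_{C'}|_{C''}$. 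For singletons $C''=\{x\}$ this is exactly what (2) provides, since both marginals equal $m_{\{x\}}$.
\item Given well-definedness, compatibility $m_{C''}=m_{C'}|_{C''}$ for $C''\subseteq C'$ then follows from the first step.
\end{enumerate}

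The main obstacle is the well-definedness step for overlaps $C\cap C'$ containing two or more measurements. Agreement of the single-measurement marginals does not by itself force agreement of the joint marginals on a larger overlap, because two distinct joint \povm's can share the same one-point marginals. My first attempt will be to apply (2) to the overlap context $D=C\cap C'\in\Sg$ itself. There $m_D$ is a joint measurement of the $m_{\{x\}}$, $x\in D$, and the task is to identify it with both $m_C|_D$ and $m_{C'}|_D$. I expect this to go through exactly when the covers meet in at most single measurements, as in Bell and Kochen--Specker scenarios, where every overlap is a singleton. In general, I would need to read ``jointly measurable by $m_C$'' as requiring the sub-joint measurements to be marginals as well, and I will state clearly which reading the argument uses.
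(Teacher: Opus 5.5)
\textbf{Verdict: genuine gap.} Your (1)$\Rightarrow$(2) is fine, and your diagnosis of the obstacle is correct. But the gap in (2)$\Rightarrow$(1) cannot be closed. The obstruction you found (one-point marginals do not determine a joint POVM) makes that implication false in the stated generality. The paper states the proposition without proof, so there is nothing there that gets around this.

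\emph{Counterexample (literal reading).} Take $X=\{x,y,z\}$ with every subset a context and binary outcomes.
Let $m_{\{x,y,z\}}$ assign $\tfrac12 I$ to $000$ and $111$, and $0$ elsewhere.
Let $m_{\{y,z\}}$ and $m_{\{x,z\}}$ be its marginals, and let every $m_{\{w\}}=(\tfrac12 I,\tfrac12 I)$.
Put $m_{\{x,y\}}(s)=\tfrac14 I$ for all four $s$.
Every component has one-point marginals $m_{\{w\}}$, so (2) holds. Yet $m_{\{x,y,z\}}|_{\{x,y\}}$ puts $\tfrac12 I$ on $00$ and $11$, so (1) fails.

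\emph{Your reinterpretation does not rescue it.} Defining the non-maximal components as marginals changes the statement: the family is indexed by all of $\Sigma$, so $m_{\{x,y\}}$ is given data. Even under that reading the implication fails. Take the cover $\{\{x,y,z\},\{x,y,w\}\}$, with the first context correlated on $\{x,y\}$ as above and the second uniform ($\tfrac18 I$ on every outcome). All one-point marginals agree, but the two marginals on the overlap $\{x,y\}$ differ.

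\emph{Your proposed repair cannot work.} Applying (2) to $D=C\cap C'$ only tells you that $m_D$, $m_C|_D$ and $m_{C'}|_D$ share the same one-point marginals. That is exactly the information you already observed is insufficient to identify joint POVMs. Also, overlaps are not singletons in multipartite Bell scenarios, e.g.\ $\{a_0,b_0,c_0\}\cap\{a_0,b_0,c_1\}=\{a_0,b_0\}$.

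\emph{What you can actually prove.} Your steps 1--3 establish a restricted statement, and you should state it as such:
\begin{itemize}
\item In the literal reading, (2)$\Rightarrow$(1) holds when every context has at most two elements, so every proper subcontext is a singleton or empty (e.g.\ bipartite Bell scenarios).
\item In the cover reading, it holds when maximal contexts pairwise meet in at most one measurement. In that case your step~2 already suffices, with no appeal to $D$.
\item It also holds without restriction when every $m_{\{x\}}$ is projective. A joint POVM with projective marginals is forced to be their product (cf.\ Remark~\ref{remark:pvms}), so one-point marginals then determine all marginals. This is why any counterexample needs unsharp measurements.
\end{itemize}
Beyond these cases, (2) would have to be strengthened to say that every $m_{C''}$ with $C''\subseteq C$ is a marginal of $m_C$. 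At that point it is just a restatement of (1).
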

\begin{remark}
\label{remark:pvms}
The same holds for \pvm~models, and in this case joint measurability implies that the \pvm's $m_{\{x\}}$ pairwise commute, and $m_C$ is their product \cite{heinosaari2008notes}.
\end{remark}

Given a measurement scenario $\XX = (X, \Sg, O)$ and an effect algebra $A$, we shall write $\Mod_{A}(\XX)$ for the set of $A$-measurement models over $\XX$.
This can be promoted from a family of sets indexed by scenarios to a fibration,  using the notion of \emph{simulation} from \cite{abramsky2019comonadic}. However, we shall not enter into this here.

\begin{remark}
Note that models over the effect algebra $\{ 0, 1 \}$ are exactly the deterministic models. Indeed, $\Dist_{\{0,1\}} \ES \simeq \ES$. Since $\ES$ is a sheaf, these models are always non-contextual.
To accommodate \emph{possibilistic} models in the sense of \cite{AbramskyBrandenburger2011}, we must weaken the notion of effect algebra by omitting the final axiom: if $a \orth 1$ then $a = 0$. We can call this generalization \emph{soft effect algebras}. Boolean algebras with disjunction as a total operation (``inclusive or'') are examples of soft effect algebras. The models over $\{ 0, 1\}$ with this total operation are exactly the possibilistic models of \cite{AbramskyBrandenburger2011}.
\end{remark}
\subsection{States as observational windows}

A state $\sg : A \to [0,1]$ induces a natural transformation $\Dist_{\sg} : \Dist_A \natarrow \Dist_{[0,1]}$. 
Given a scenario $\XX = (X, \Sg, O)$, for each $C \in \Sg$ this gives a map $\Dist_{\sg, C} : \Dist_{A}(\ES(C)) \to \Dist_{[0,1]}(\ES(C))$.
This in turn induces a map $\sg_* : \Mod_A(\XX) \to \Mod_{[0,1]}(\XX)$:
\[ \sg_*(\{ m_C \}_{C \in \Sg}) \; = \; \{ \Dist_{\sg, C}(m_C) \}_{C \in \Sg} . \]
The naturality of $\Dist_{\sg}$ ensures that this preserves compatibility.

In this way, states on $A$ transform $A$-models to empirical models in the usual Abramsky-Brandenburger sense.
Thus we can regard each state of $A$ as giving a directly observable window on $A$-models.
Different $A$-models may give rise to the
same family of observable empirical behaviours, so these windows need
not completely determine the underlying measurement structure.

This opens up something new in the landscape of measurement scenarios and models. Given any property of measurement models, there are two possible versions we can consider:
\begin{itemize}
\item The \emph{internal} version, which applies directly to $A$-models for any choice of effect algebra $A$.
\item The \emph{pointwise} version, in which we take the property for empirical models, and lift it to $A$-models by considering, for each $A$-model $m$, how the property applies to the set of empirical models $\{ \sg_*(m) \mid \sg : A \to [0,1] \}$.
\end{itemize}
The comparison between these will be a theme running through the remainder of this paper.

\subsection{$A$-realizability}

The notion of \emph{quantum realizability} of empirical models was previously handled in a somewhat ad hoc fashion in \cite{AbramskyBrandenburger2011} and its sequels. We can now give a smooth structural account. An empirical model $e \in \Mod_{[0,1]}(\XX)$ is \emph{$A$-realizable} if there is a state $\sg : A \to [0,1]$ and an $A$-model $m \in \Mod_A(\XX)$ such that $\sg_*(m) = e$. In particular, if $A = \Eff(\HH)$, then $e$ is \povm-realizable, while if $A = \Proj(\HH)$, then $e$ is \pvm-realizable.

More generally, we can say that a $B$-model $m \in \Mod_B(\XX)$ is $A$-realizable if there is a morphism $h : A \to B$ such that $m$ is in the image of $h_{\ast} : \Mod_A(\XX) \to \Mod_B(\XX)$. 

\section{Contextuality for measurement models}
\label{context:sec}

Given a measurement scenario $\XX = (X, \Sg, O)$ and an effect algebra $A$, we define contextuality for $A$-models in exactly the same way as in the standard Abramsky-Brandenburger account \cite{AbramskyBrandenburger2011}. A measurement model $m \in \Mod_A(\XX)$ is \emph{$A$-non-contextual} if there is a distribution $d \in \Dist_A(\ES(X))$ such that $d |_C = m_C$ for all $C \in \Sg$. It is \emph{$A$-contextual} if there is no such distribution.
$A$-contextuality is the internal notion as discussed in the previous section.

For the effect algebras $[0,1]$ and $\{ 0,1 \}$, these notions coincide with probabilistic and logical/possibilistic contextuality as defined in \cite{AbramskyBrandenburger2011} respectively.
For the quantum effect algebra $\Eff(\HH)$, with $m$ a \povm-model, non-contextuality is the joint measurability of the \povm's in the model, and similarly for \pvm-models.

Given a state $\sg : A \to [0,1]$, we say that $m$ is $\sg$-contextual if the empirical model $\sg_*(m)$ is contextual in the sense of \cite{AbramskyBrandenburger2011}. This is the pointwise version.

\begin{proposition}
If $m$ is $A$-non-contextual, then it is $\sg$-non-contextual for all states $\sg : A \to [0,1]$.
Thus if $m$ is $\sg$-contextual for some state $\sg$, it is $A$-contextual.
\end{proposition}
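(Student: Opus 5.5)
The plan is to push the global section forward along the state, using naturality of $\Dist_{\sg}$. Suppose $m \in \Mod_A(\XX)$ is $A$-non-contextual, so there is $d \in \Dist_A(\ES(X))$ with $d|_C = m_C$ for every $C \in \Sg$. Fix a state $\sg : A \to [0,1]$. The candidate witness for non-contextuality of $\sg_*(m)$ is $\Dist_{\sg,\ES(X)}(d) = \sg \circ d \in \Dist_{[0,1]}(\ES(X))$. It remains to show that its marginal on each context $C$ equals $\Dist_{\sg,\ES(C)}(m_C)$, which is the $C$-component of $\sg_*(m)$.

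For this step, recall that marginalisation to $C$ is by definition $\Dist_A(\rho^X_C)$ for the projection $\rho^X_C : \ES(X) \to \ES(C)$. It is the pushforward along $\rho^X_C$. Naturality of $\Dist_{\sg}$, noted when $\Dist_h$ was introduced for effect-algebra morphisms $h$, gives
\[
\Dist_{[0,1]}(\rho^X_C)\bigl(\sg\circ d\bigr) = \sg \circ \bigl(\Dist_A(\rho^X_C)(d)\bigr) = \sg \circ d|_C = \sg\circ m_C .
\]
One can also check this directly. For $s \in \ES(C)$ we have
\[
\sum_{s'|_C = s} \sg(d(s')) = \sg\Bigl(\bigoplus_{s'|_C=s} d(s')\Bigr),
\]
because $\sg$ preserves finite defined $\oplus$-sums. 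This follows by induction from the binary case, and finite support keeps the sum finite. The identification of $\oplus$ on $[0,1]$ with ordinary addition below $1$ is what makes this the usual marginal.

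Hence $\sg\circ d$ is a global section for $\sg_*(m)$, so $m$ is $\sg$-non-contextual. The second sentence of the proposition is the contrapositive. The only point needing any care is the preservation of finite partial sums by $\sg$, and this is routine. I expect no real obstacle.
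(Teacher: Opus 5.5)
Your proposal is correct and is essentially the paper's own proof: the paper takes the push-forward $\sigma_*(d)$ of the $A$-valued global section $d$ along the state and observes that it witnesses non-contextuality of $\sigma_*(m)$. Your appeal to naturality of post-composition with $\sigma$ (equivalently, $\sigma$ preserving finite $\oplus$-sums) makes explicit the marginalisation step that the paper leaves implicit.
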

\begin{proof}
Given $d \in \Dist_A(\ES(X))$ witnessing the non-contextuality of $m$, and a state $\sg : A \to [0,1]$, the push-forward $\sg_*(d) := \Dist_{\sg, \ES(X)}(d)$ witnesses the non-contextuality of $\sg_*(m)$. 
\end{proof}

Thus internal non-contextuality  implies pointwise non-contextuality for all states.
A converse requires additional assumptions.

\subsection{Coherent classical representability}
\label{subsec:coherent}

We define $\St$ to be the hom functor $\St = \hom_{\EA}(-, [0,1])$.
Given an effect algebra $A$, $\St(A)$ is its set of states. Note that $\St(A)$ carries a convex structure inherited pointwise from that on $[0,1]$.

Evaluation induces a map
\[
\iota : A \longrightarrow [0,1]^{\St(A)}, \qquad a \mapsto (\sigma(a))_{\sigma\in \St(A)}.
\]
We say that states \emph{separate points} in $A$  if  $\iota$ is injective, so that we may regard $A$ as a subset of $[0,1]^{\St(A)}$.

This extends pointwise to distributions: for any set $X$, we obtain a map
\[
\iota_* : \Dist_A(X) \hookrightarrow \prod_{\sigma\in \St(A)} \Dist_{[0,1]}(X),
\]
given by
\[
d \mapsto (\sigma_*(d))_{\sigma\in \St(A)}.
\]

The distinction between internal and observable contextuality can be
formulated as a lifting or reconstruction prooblem: given a family of empirical models generated by states from an $A$-model $m$, can we use them to reconstruct $m$?

Using the ordered-vector-space representation theorem for effect modules \cite{GudderPulmannova1998},
we can regard $A$ as an interval $[0,u]_V$ in an ordered vector space.
Finite orthogonal sums are then ordinary vector sums subject to the order-unit bound.

\begin{lemma}[Statewise lifting of \(A\)-valued distributions]
\label{lem:statewise_lifting}
Let \(A=[0,u]_V\) be an effect module whose states separate points.
Let \(Y\) be a finite set, and let \(a\in A\).

Suppose we are given a family
\[
(p_\sigma)_{\sigma\in\St(A)},
\qquad
p_\sigma:Y\to[0,1],
\]
such that:
\begin{enumerate}
\item for every \(\sigma\in\St(A)\),
\[
\sum_{y\in Y} p_\sigma(y)=\sigma(a),
\]
\item for every \(y\in Y\), there exists \(a_y\in A\) such that
\[
p_\sigma(y)=\sigma(a_y)
\qquad
\text{for all }\sigma\in\St(A).
\]
\end{enumerate}

Then there exists a unique function
\[
d:Y\to A
\]
such that
\[
\sum_{y\in Y} d(y)=a
\]
and
\[
\sigma(d(y))=p_\sigma(y)
\qquad
\text{for all }y\in Y,\ \sigma\in\St(A).
\]
\end{lemma}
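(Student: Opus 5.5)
The natural choice is to set \(d(y) := a_y\), using hypothesis (2), and then check the two required properties. Uniqueness comes from the fact that states separate points. The only real content is verifying that \(\sum_{y} a_y\) is defined in \(A\) and equals \(a\). Working in the representation \(A=[0,u]_V\), sums are taken in \(V\), and membership in \(A\) just means lying in the order interval.

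\emph{Uniqueness.} Suppose \(d,d'\) both satisfy the conclusion. Then \(\sigma(d(y))=p_\sigma(y)=\sigma(d'(y))\) for all \(\sigma\in\St(A)\). Since \(\iota\) is injective, \(d(y)=d'(y)\) for each \(y\).

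\emph{Existence.} Put \(d(y):=a_y\). The statewise condition \(\sigma(d(y))=p_\sigma(y)\) then holds by construction. Next, form the vector sum \(v:=\sum_{y\in Y} a_y\in V\), which is a finite sum. Each state \(\sigma\) of \(A\) extends to a positive linear functional \(\hat\sigma\) on \(V\) with \(\hat\sigma(u)=1\). This is the standard correspondence in the Gudder--Pulmannov\'a representation: additivity on orthogonal pairs and the convex structure give \(\mathbb{R}\)-linearity on the span of \([0,u]\), and \(V\) is generated by \(A\). By linearity and hypothesis (1),
\[
\hat\sigma(v)=\sum_{y}\sigma(a_y)=\sum_y p_\sigma(y)=\sigma(a)=\hat\sigma(a)
\qquad\text{for all }\sigma .
\]

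\emph{The main obstacle} is concluding \(v=a\) and \(v\in A\). The difficulty is that separation of points is assumed only on \(A\), not on all of \(V\), and a priori \(v\) might lie outside \([0,u]\). My first attempt is to argue inside \(A\) via partial sums, showing inductively that \(a_{y_1}\oplus\cdots\oplus a_{y_k}\) is defined. I need \(a_{y_1}\le a_{y_2}^\perp\), that is, \(a_{y_1}+a_{y_2}\le u\) in \(V\). Statewise we have \(\sigma(a_{y_1})+\sigma(a_{y_2})\le \sigma(a)\le 1\), so \(s:=a_{y_1}+a_{y_2}\) satisfies \(0\le \hat\sigma(s)\le 1\) for every state. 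To conclude \(s\le u\), I would invoke the standard fact that in the Gudder--Pulmannov\'a representation the positive cone is determined by the states. This holds when states are order-determining, which I would take as the intended content of ``states separate points'' in the ordered-vector-space picture. If that is not automatic, I would add it as the operative hypothesis, since the state space is then a separating family of positive functionals on the Archimedean order-unit space. With positivity detected by states, each partial sum lies in \([0,u]\), so \(\bigoplus_y a_y\) is defined in \(A\) and equals \(v\). Finally, \(v\) and \(a\) both lie in \(A\) and agree on all states, so injectivity of \(\iota\) gives \(v=a\), as required.
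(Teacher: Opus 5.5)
Your construction $d(y):=a_y$ and your uniqueness argument are the same as the paper's. You are also right that the paper's one-line proof tacitly assumes $\bigoplus_y a_y$ is defined in $A$ before applying separation. However, your way of closing that point has a genuine gap. The ``standard fact'' that states determine the positive cone requires the Archimedean property. The Gudder--Pulmannov\'a representation does not provide it, and separation of points does not imply it. For example, take $V=\mathbb{R}^2$, $V_+=\{(x,y)\mid x>0,\ y>0\}\cup\{0\}$ and $u=(1,1)$. Then $[0,u]_V$ is an effect module whose states $(x,y)\mapsto \alpha x+\beta y$ (with $\alpha,\beta\ge 0$, $\alpha+\beta=1$) separate points. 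Yet $(1,0)$ is nonnegative under every state without lying in $V_+$. So reading ``states separate points'' as ``states are order-determining'' strengthens the hypothesis, and as written you prove the lemma only under that added assumption.

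The missing idea is that separation on $A$ already gives separation on $V$, so the obstacle you describe is not there. Since $V_+$ generates $V$ and $u$ is an order unit, every $x\in V$ can be written $x=k(b-c)$ with $k>0$ and $b,c\in A$. If $\hat\sigma(x)=0$ for all $\sigma\in\St(A)$, then $\sigma(b)=\sigma(c)$ for all $\sigma$, so $b=c$ and $x=0$. (Equivalently, for $n=|Y|\geq 1$, both $\frac1n v$ and $\frac1n a$ lie in $[0,u]$ and agree on all states.) Your identity $\hat\sigma(v)=\hat\sigma(a)$ therefore gives $v=a$ in $V$ at once. Definedness then follows afterwards rather than having to be established first. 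Each partial sum $s_k=a_{y_1}+\cdots+a_{y_k}$ satisfies $s_k\in V_+$ and $u-s_k=(u-a)+\sum_{j>k}a_{y_j}\in V_+$. Hence every step of the iterated $\oplus$ is defined in $A$ and the total is $a$. Reversing the order of your two steps in this way removes any need for order-determining states.
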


\begin{proof}
Define \(d(y)=a_y\). Then for every state \(\sigma\),
\[
\sigma\!\left(\sum_{y\in Y} d(y)\right)
=
\sum_{y\in Y}\sigma(a_y)
=
\sum_{y\in Y} p_\sigma(y)
=
\sigma(a).
\]
Since states separate points,
\[
\sum_{y\in Y} d(y)=a.
\]
The remaining condition is immediate from the definition of \(d\).

Uniqueness again follows from separation of points.
\end{proof}

\begin{remark}
Ordinary probability distributions arise as the special case \(a=u\),
while subprobability distributions of total weight \(r\in[0,1]\) arise
by taking \(a=r\cdot u\).
\end{remark}

For each state \(\sigma\in\St(A)\), evaluation gives an ordinary
empirical model
\[
\sigma_\ast(m)\in \Mod_{[0,1]}(\XX).
\]
Suppose that each observable window \(\sigma_\ast(m)\) is
non-contextual. Then for each \(\sigma\) there exists a global
probability distribution
\[
p_\sigma\in \Dist_{[0,1]}(\ES(X))
\]
such that
\[
p_\sigma|_C=\sigma_\ast(m_C)
\qquad
\text{for all } C\in\Sg.
\]

The family \((p_\sigma)_{\sigma\in\St(A)}\) is said to be
\emph{coherently representable in \(A\)} if, for every global assignment
\(t\in\ES(X)\), the coefficient function
\[
\St(A)\longrightarrow [0,1],
\qquad
\sigma\longmapsto p_\sigma(t)
\]
is represented by an element of \(A\), i.e. there exists \(a_t\in A\)
such that
\[
p_\sigma(t)=\sigma(a_t)
\qquad
\text{for all }\sigma\in\St(A).
\]

\begin{theorem}[Coherent classical representability]
\label{thm:cohclassrep}
Let \(A=[0,u]_V\) be an effect module whose states separate points, and
let \(m\in\Mod_A(\XX)\). Then the following are equivalent:
\begin{enumerate}
\item \(m\) is internally \(A\)-non-contextual.
\item The family of observable empirical models
\[
\{\sigma_\ast(m)\}_{\sigma\in\St(A)}
\]
admits a coherently representable family of classical global
explanations.
\end{enumerate}
\end{theorem}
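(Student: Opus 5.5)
The two directions are of very different character: (1)$\Rightarrow$(2) is a direct push-forward argument, and (2)$\Rightarrow$(1) is an application of Lemma~\ref{lem:statewise_lifting} with $Y=\ES(X)$ and $a=u$.

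\emph{(1)$\Rightarrow$(2).} Let $d\in\Dist_A(\ES(X))$ witness internal non-contextuality, so $d|_C=m_C$ for all $C\in\Sg$. For each state $\sigma$, set $p_\sigma:=\sigma_*(d)$. As in the proof of the preceding proposition, naturality of $\Dist_\sigma$ gives $p_\sigma|_C=\sigma_*(d|_C)=\sigma_*(m_C)$. So each $p_\sigma$ is a classical global explanation of $\sigma_*(m)$. Coherent representability is immediate: for every $t\in\ES(X)$ we have $p_\sigma(t)=\sigma(d(t))$, so $a_t:=d(t)$ represents the coefficient function, uniformly in $\sigma$.

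\emph{(2)$\Rightarrow$(1).} Fix a coherently representable family $(p_\sigma)$ with representing elements $a_t$. The first step is finiteness. Since $p_\sigma(t)=\sigma(a_t)$ and states separate points, the set $\{t : a_t\neq 0\}$ equals $\bigcup_\sigma\operatorname{supp}(p_\sigma)$. I would handle this by assuming, as is implicit throughout, that $X$ and all $O_x$ are finite, so $Y=\ES(X)$ is finite. Otherwise one needs an extra argument that this union is finite, for instance by bounding it by supports of $m_C$ data; this is where I expect some care.

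Now apply Lemma~\ref{lem:statewise_lifting} with $Y=\ES(X)$ and $a=u$. Condition (1) of the lemma holds because each $p_\sigma$ is a probability distribution, so $\sum_t p_\sigma(t)=1=\sigma(u)$. Condition (2) is exactly coherent representability. This yields $d:\ES(X)\to A$ with $d(t)=a_t$ and $\sum_t d(t)=u$, the latter an orthogonal sum in $A=[0,u]_V$, so $d\in\Dist_A(\ES(X))$.

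It remains to check that $d|_C=m_C$ for each context $C$ and each $s\in\ES(C)$. Compute
\[
\sigma\bigl(d|_C(s)\bigr)=\sum_{t|_C=s}\sigma(a_t)=\sum_{t|_C=s}p_\sigma(t)=p_\sigma|_C(s)=\sigma\bigl(m_C(s)\bigr).
\]
Here the orthogonal sum $\bigoplus_{t|_C=s}d(t)$ is defined in $A$ because it is a partial sum of a sum bounded by $u$. Since this holds for all states $\sigma$ and states separate points, $d|_C(s)=m_C(s)$, so $d$ witnesses internal $A$-non-contextuality.

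The main subtlety is the bookkeeping that partial sums exist in $A$ (via the interval representation) together with the finiteness of support; the rest is separation of points.
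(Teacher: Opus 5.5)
Your proposal is correct and follows the paper's proof essentially step for step. For (1)$\Rightarrow$(2) you push the internal witness $d$ forward along each state, and for (2)$\Rightarrow$(1) you apply Lemma~\ref{lem:statewise_lifting} with $Y=\ES(X)$, $a=u$, then use separation of points to get $d|_C(s)=m_C(s)$. Your explicit assumption that $\ES(X)$ is finite is one the paper also makes, but only implicitly, since the lemma is stated for finite $Y$.
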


\begin{proof}
Suppose first that \(m\) is internally \(A\)-non-contextual. Then there
exists
\[
d\in \Dist_A(\ES(X))
\]
such that
\[
d|_C=m_C
\qquad
\text{for all } C\in\Sg.
\]
For each state \(\sigma\in\St(A)\), define
\[
p_\sigma:=\sigma_*(d)\in \Dist_{[0,1]}(\ES(X)).
\]
Then
\[
p_\sigma|_C
=
\sigma_*(d|_C)
=
\sigma_*(m_C),
\]
so \(p_\sigma\) is a global classical explanation of
\(\sigma_*(m)\). Moreover, for each \(t\in\ES(X)\),
\[
p_\sigma(t)=\sigma(d(t)),
\]
so the family \((p_\sigma)_\sigma\) is coherently representable, with
representing element \(d(t)\in A\).

Conversely, suppose that the observable models
\[
\{\sigma_*(m)\}_{\sigma\in\St(A)}
\]
admit a coherently representable family of classical global
explanations
\[
(p_\sigma)_{\sigma\in\St(A)}.
\]
Thus, for each \(\sigma\), \(p_\sigma\in\Dist_{[0,1]}(\ES(X))\) and
\[
p_\sigma|_C=\sigma_*(m_C)
\qquad
\text{for all } C\in\Sg.
\]
Moreover, for each \(t\in\ES(X)\), there exists \(a_t\in A\) such that
\[
p_\sigma(t)=\sigma(a_t)
\qquad
\text{for all }\sigma\in\St(A).
\]

Apply Lemma~\ref{lem:statewise_lifting} with
\[
Y=\ES(X),
\qquad
a=u.
\]
Since each \(p_\sigma\) is normalized,
\[
\sum_{t\in\ES(X)}p_\sigma(t)=1=\sigma(u),
\]
the hypotheses of the lemma are satisfied. Hence there exists a unique
\[
d\in\Dist_A(\ES(X))
\]
such that
\[
\sigma_*(d)=p_\sigma
\qquad
\text{for all }\sigma\in\St(A).
\]

It remains only to check that \(d\) is a global section for \(m\).
For every context \(C\in\Sg\), every \(s\in\ES(C)\), and every state
\(\sigma\), we have
\[
\sigma(d|_C(s))
=
(\sigma_*(d))|_C(s)
=
p_\sigma|_C(s)
=
\sigma_*(m_C)(s)
=
\sigma(m_C(s)).
\]
Since states separate points, it follows that
\[
d|_C(s)=m_C(s).
\]
Thus
\[
d|_C=m_C
\qquad
\text{for all } C\in\Sg,
\]
so \(m\) is internally \(A\)-non-contextual.
\end{proof}

\begin{remark}
This theorem shows that observable non-contextuality state by state is
not, by itself, the same as internal \(A\)-non-contextuality. The missing
ingredient is coherence across states. Internal non-contextuality is
equivalent to the existence of classical explanations for all observable
windows which assemble into a single \(A\)-valued global explanation.
\end{remark}

\subsection{Global sections versus factorisable hidden-variable models}

In the standard probabilistic setting, the Fine--Abramsky--Brandenburger theorem \cite{fine1982hidden,AbramskyBrandenburger2011} identifies several notions:
\begin{itemize}
\item existence of a global section,
\item existence of a deterministic hidden-variable model,
\item existence of a factorisable hidden-variable model.
\end{itemize}
For empirical models valued in the ordinary probability distribution monad, these notions coincide \cite{AbramskyBrandenburger2011}.

A noteworthy feature of the present generalized setting is that this equivalence no longer holds automatically at the level of general effect-valued measurement models. The reason is structural. The definition of a global section uses only the additive structure of the effect algebra:
\[
d \in \Dist_A(\ES(X)).
\]
Thus internal contextuality is well-defined for arbitrary effect algebras.

By contrast, factorisable hidden-variable models require an additional operation combining local response weights into joint weights. In the probabilistic case this is multiplication:
\[
p(s \mid C,\lambda)
=
\prod_{x\in C} p_x(s(x)\mid x,\lambda).
\]
Similarly, the construction of a global distribution from local response functions takes the form
\[
d(t)
=
\sum_{\lambda} q(\lambda)
\prod_{x\in X} p_x(t(x)\mid x,\lambda),
\]
again using multiplication.

For a general effect algebra or effect module, no such multiplication is available. Consequently, while global sections and contextuality remain well-defined, the notion of a factorisable hidden-variable model is not available at this level of generality.

This separation is conceptually significant. It shows that the sheaf-theoretic notion of contextuality is fundamentally a local-to-global consistency condition, independent of probabilistic factorisation structure. In the ordinary probabilistic setting, factorisability and existence of global sections coincide because the semiring structure of $[0,1]$ supplies a compatible multiplication. Our more general approach separates these notions, exposing contextuality as the more primitive concept.

Extending Fine–Abramsky–Brandenburger factorisation arguments requires additional multiplicative structure on effect algebras.  In particular, \emph{effect monoids} \cite{jacobs2011probabilities} provide operations
\[
\_ \cdot \_ : A \times A\to A
\]
which allow local weights to be combined into joint weights. If we assume that we are working over a  \emph{commutative effect monoid}, we can formulate generalized factorisable hidden-variable models and recover analogues of the Fine--Abramsky--Brandenburger correspondence. The standard result from \cite{AbramskyBrandenburger2011} indeed arises from the effect monoid $[0,1]$.

Existing structure theory \cite{WesterbaanWesterbaanVanDeWetering2020,ChoWesterbaanVanDeWetering2022} shows that commutative effect monoids are highly constrained, lying close to MV-algebraic or function-algebraic settings, which are essentially classical. This suggests that the coincidence between global sections and factorisable hidden-variable models is a special feature of classical probabilistic structure, rather than a generic property of contextuality itself.


\section{Convex structure of measurement models}

\subsection{The convex structure of $\Mod_A(\XX)$}

Fix a measurement scenario $\XX = (X,\Sg,O)$ and an effect module $A$.

The set $\Mod_A(\XX)$ of $A$-models over $\XX$ carries a natural convex  structure defined pointwise. Given $m,m' \in \Mod_A(\XX)$ and $r\in[0,1]$, define
\[
(rm \oplus (1-r)m')_C(s)
:=
r\cdot m_C(s) \oplus (1-r)\cdot m'_C(s).
\]
The fact that convex combination is a well-defined total operation needs an argument.
\begin{lemma}
Let \(A\) be an effect module, and let
\[
m,m'\in \Mod_A(\XX).
\]
For every \(r\in[0,1]\), the pointwise convex combination
\[
(rm \oplus (1-r)m')_C(s)
:=
r\cdot m_C(s)\oplus (1-r)\cdot m'_C(s)
\]
is well-defined and defines an element of \(\Mod_A(\XX)\).
\end{lemma}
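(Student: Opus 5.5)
We need to establish three things: that each pointwise combination $r\cdot m_C(s)\oplus(1-r)\cdot m'_C(s)$ is defined, that the result is normalised in each context (so lies in $\Dist_A(\ES(C))$), and that the resulting family is compatible.

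\textbf{Definedness.} The key observation is that for any $a,b\in A$ and $r\in[0,1]$ we have $r\cdot a\orth(1-r)\cdot b$. To see this, note that $b\le 1$, so $(1-r)\cdot b\le(1-r)\cdot 1$. Scalar multiplication is monotone: if $b\oplus c=1$ then $(1-r)\cdot b\oplus(1-r)\cdot c=(1-r)\cdot 1$ by the distributivity axiom. By the same reasoning, $r\cdot a\le r\cdot 1$. The axiom $(r+s)\cdot a=(r\cdot a)\oplus(s\cdot a)$, applied with $a=1$ and $s=1-r$, gives $r\cdot 1\oplus(1-r)\cdot 1=1$. Hence $r\cdot 1\orth(1-r)\cdot 1$. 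Finally, downward closure of orthogonality ($a\le a'\orth b$ implies $a\orth b$), applied twice, yields $r\cdot a\orth(1-r)\cdot b$. I expect this to be the only genuinely non-formal step, and it uses only the recorded effect-algebra facts.

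\textbf{Normalisation.} For this I would reuse the computation from the verification of the monad action $\alpha$. Indeed, the pointwise combination is exactly $\alpha_{\ES(C)}(r\delta_{m_C}+(1-r)\delta_{m'_C})$, which was already shown to lie in $\Dist_A(\ES(C))$. The underlying argument is worth recording because it also covers definedness of the finite sum. The family $\{r\cdot m_C(s)\}_s\cup\{(1-r)\cdot m'_C(s)\}_s$ has sum $r\cdot\bigoplus_s m_C(s)\oplus(1-r)\cdot\bigoplus_s m'_C(s)=r\cdot1\oplus(1-r)\cdot1=1$, using distributivity of scalars over finite $\oplus$ (by induction from the binary axiom). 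Since a sub-sum of a defined sum is defined, regrouping by associativity and commutativity gives $\bigoplus_s\bigl(r\cdot m_C(s)\oplus(1-r)\cdot m'_C(s)\bigr)=1$. Finite support is clear, since $r\cdot0=0$.

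\textbf{Compatibility.} For $C\subseteq C'$ and $s\in\ES(C)$, I would compute
\[
\bigoplus_{s'|_C=s}\bigl(r\cdot m_{C'}(s')\oplus(1-r)\cdot m'_{C'}(s')\bigr)
= r\cdot m_{C'}|_C(s)\oplus(1-r)\cdot m'_{C'}|_C(s)
= r\cdot m_C(s)\oplus(1-r)\cdot m'_C(s).
\]
The first equality again uses regrouping of a defined finite sum together with distributivity of scalar multiplication over finite $\oplus$; the second uses compatibility of $m$ and $m'$. Equivalently, this is the naturality of $\alpha$ with respect to the restriction maps $\ES(C')\to\ES(C)$. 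Hence the combination is a compatible family, and therefore an element of $\Mod_A(\XX)$.
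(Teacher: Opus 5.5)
Your proposal is correct and follows essentially the same route as the paper. Definedness comes from $r\cdot m_C(s)\le r\cdot 1$, $(1-r)\cdot m'_C(s)\le (1-r)\cdot 1$, $r\cdot 1\oplus(1-r)\cdot 1=1$ and downward closure of orthogonality; normalisation and compatibility then follow by regrouping finite sums and distributing scalar multiplication. The differences are cosmetic: you bound $(1-r)\cdot m'_C(s)$ using $b\le 1$ directly rather than via the normalisation of $m'_C$, and you make explicit the link to the action $\alpha$, which the paper leaves implicit.
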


\begin{proof}
Fix \(C\in\Sg\) and \(s\in\ES(C)\). We first show that
\[
r\cdot m_C(s)\oplus (1-r)\cdot m'_C(s)
\]
is defined.

Since \(m'\) is normalized,
\[
\bigoplus_{t\in\ES(C)} m'_C(t)=1.
\]
Applying scalar multiplication and using distributivity,
\[
\bigoplus_{t\in\ES(C)} (1-r)\cdot m'_C(t)
=
(1-r)\cdot 1.
\]
Hence
\[
(1-r)\cdot m'_C(s)\le (1-r)\cdot 1.
\]

Similarly,
\[
r\cdot m_C(s)\le r\cdot 1.
\]

Now
\[
r\cdot 1 \oplus (1-r)\cdot 1
=
(r+(1-r))\cdot 1
=
1,
\]
so
\[
r\cdot 1 \perp (1-r)\cdot 1.
\]
Since orthogonality is downward closed with respect to the order,
it follows that
\[
r\cdot m_C(s)
\perp
(1-r)\cdot m'_C(s).
\]
Thus the required sum is defined.

We now verify normalization:
\[
\begin{aligned}
\bigoplus_{s\in\ES(C)}
(rm \oplus (1-r)m')_C(s)
&=
\bigoplus_s
\bigl(
r\cdot m_C(s)\oplus (1-r)\cdot m'_C(s)
\bigr)
\\
&=
\left(
\bigoplus_s r\cdot m_C(s)
\right)
\oplus
\left(
\bigoplus_s (1-r)\cdot m'_C(s)
\right)
\\
&=
r\cdot
\left(
\bigoplus_s m_C(s)
\right)
\oplus
(1-r)\cdot
\left(
\bigoplus_s m'_C(s)
\right)
\\
&=
r\cdot 1 \oplus (1-r)\cdot 1
\\
&=
1.
\end{aligned}
\]

Finally, compatibility is preserved because restriction maps are
defined by marginalization, which preserves finite orthogonal sums and
commutes with scalar multiplication:
\[
(rm \oplus (1-r)m')_C|_U
=
r(m_C|_U)\oplus (1-r)(m'_C|_U).
\]
Since \(m\) and \(m'\) are compatible families, so is
\[
rm\oplus(1-r)m'.
\]
Hence
\[
rm\oplus(1-r)m'\in \Mod_A(\XX).
\]
\end{proof}

Thus $\Mod_A(\XX)$ is closed under finite convex combinations, and hence forms a convex  set.

More conceptually, we have
\[
\begin{tikzcd}
\Mod_A(\XX)
\arrow[r, hookrightarrow]
&
\displaystyle \prod_{C\in\Sg} \Dist_A(\ES(C))
\arrow[r, shift left=0.7ex, "{\rho^{C}_{C \cap C'}}"]
\arrow[r, shift right=0.7ex, swap, "{\rho^{C'}_{C \cap C'}}"]
&
\displaystyle \prod_{C,C'\in\Sg} \Dist_A(\ES(C\cap C'))
\end{tikzcd}
\]
the equalizer imposing the compatibility conditions. Since products and equalizers of convex sets (with convex-linear maps) are convex, it follows that $\Mod_A(\XX)$ is a convex subset of a product of the spaces $\mathcal D_A(\ES(C))$.

\medskip

\noindent
\textbf{Examples}
\begin{itemize}
\item If $A=[0,1]$, then $\Mod_A(\XX)$ is a convex polytope (the usual no-signalling polytope).
\item If $A=\{0,1\}$, there is no nontrivial convex structure.
\item For general $A$, $\Mod_A(\XX)$ is typically not a polytope, and its geometry reflects the convex structure of $A$.
\end{itemize}

\subsection{Extreme points}

\begin{definition}
An element $x$ of a convex set $C$ is \emph{extreme} if whenever
\[
x = r x^{(1)} \oplus (1-r)x^{(2)} \qquad (0<r<1),
\]
we have $x^{(1)} = x^{(2)} = x$.
\end{definition}

There are three levels of convex sets to be considered:
\begin{itemize}
\item An effect module $A$ is itself a convex set.
\item The set of $A$-distributions $\Dist_A(X)$ for each set $X$ is convex.
\item The set of measurement models $\Mod_A(\XX)$ over a measurement scenario $\XX$.
\end{itemize}
Passing to distributions adds the normalization constraint; while passing to measurement models adds the compatibility constraints.

As a first example to show the increasing subtleties which arise, consider the classical probability effect algebra $[0,1]$:
\begin{itemize}
\item The extreme points of $[0,1]$ are $\{ 0,1 \}$.
\item The extreme points of $\Dist_{[0,1]}(X)$ are the $\delta$-distributions $\delta_x$, $x \in X$, \ie~the deterministic assignments.
\item The extreme points of $\Mod_{[0,1]}(\XX)$ for $\XX$ the $(2,2,2)$ Bell scenario include not just the 16 deterministic assignments, but also the 8 PR-boxes, which are strongly contextual empirical models \cite{AbramskyBrandenburger2011}.
\item For $(3,2,2)$ Bell scenarios, there are $53,856$ extreme points, falling into 46 distinct equivalence classes under relabelling \cite{PironioBancalScarani2011}.
\item No contextual extreme point of a no-signalling polytope is \pvm-realizable \cite{ramanathan2016no}.
\end{itemize}

We now consider the case of quantum effects $\Eff(\HH)$.

\begin{proposition}
The extreme points of $\Eff(\HH)$ are the projections.
\end{proposition}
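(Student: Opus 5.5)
We prove the two inclusions separately. Throughout we view $\Eff(\HH)$ as the operator interval $\{E \mid \mathbf 0 \le E \le I\}$, convex with respect to ordinary convex combinations of operators. This is exactly the convex structure coming from the effect-module scalar multiplication: $rP \oplus (1-r)Q = rP + (1-r)Q$.

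\emph{Projections are extreme.} Let $P \in \Proj(\HH)$ and suppose $P = rE_1 + (1-r)E_2$ with $0<r<1$ and $E_1,E_2 \in \Eff(\HH)$.

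First take a vector $v \in \ker P$. Then $0 = \langle v, Pv\rangle = r\langle v,E_1v\rangle + (1-r)\langle v,E_2v\rangle$. Both terms are nonnegative, so $\langle v,E_iv\rangle = 0$. For a positive operator $E_i$ this forces $E_i^{1/2}v = 0$, hence $E_iv=0$. So each $E_i$ vanishes on $\ker P$.

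Next apply the same argument to $I - P = r(I-E_1) + (1-r)(I-E_2)$, where $I - E_i \ge 0$. This shows $(I-E_i)$ vanishes on $\ker(I-P) = \operatorname{ran} P$, that is, $E_i$ acts as the identity on $\operatorname{ran}P$.

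Since $\HH = \ker P \oplus \operatorname{ran}P$ and $E_i$ is bounded, $E_i$ agrees with $P$ on both summands, so $E_i = P$.

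\emph{Extreme points are projections.} Let $E$ be extreme and suppose, for contradiction, that $E$ is not a projection. Take the spectral resolution of $E$, with spectrum contained in $[0,1]$. Because $E \neq E^2$, the spectrum is not contained in $\{0,1\}$. Choose $\epsilon>0$ such that the spectral projection $Q := \chi_{[\epsilon,1-\epsilon]}(E)$ is nonzero; this exists because some spectral value lies strictly inside $(0,1)$. Note $Q$ commutes with $E$.

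By functional calculus, define $F := f(E)$ with $f(\lambda) = \min(\lambda,1-\lambda)$. Then $F$ is positive, and $\mathbf 0 \le F$ together with $\epsilon Q \le F$ shows $F \ne 0$. Since $|f(\lambda)| \le \min(\lambda,1-\lambda)$ on $[0,1]$, the functional calculus gives $\mathbf 0 \le E \pm F \le I$. Thus $E \pm F \in \Eff(\HH)$, and
\[ E = \tfrac12(E+F) + \tfrac12(E-F) \]
is a nontrivial convex decomposition. This contradicts extremality, so $E$ is a projection.

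The main obstacle is the infinite-dimensional case, where $E$ may have continuous spectrum and no eigenvectors. The first half of the argument avoids eigenvectors entirely. The second half relies only on the bounded Borel functional calculus for self-adjoint operators: continuous functions of $E$ preserve order bounds pointwise on the spectrum, and a function that is nonzero on the spectrum yields a nonzero operator. In finite dimensions the argument reduces to diagonalising $E$ and perturbing any eigenvalue $\lambda \in (0,1)$ by $\pm\min(\lambda,1-\lambda)$.
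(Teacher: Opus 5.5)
Your proof is correct, and its second half is the paper's argument made precise: the paper's sketch says only that an effect with spectral values in $(0,1)$ admits a nontrivial perturbation within $[\mathbf{0}, I]$, and your choice $F = f(E)$ with $f(\lambda) = \min(\lambda, 1-\lambda)$ gives exactly such a perturbation. You also prove the converse inclusion, that every projection is extreme (via the kernel arguments applied to $P$ and to $I-P$), which the paper's sketch does not address.
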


\begin{proof}[Sketch]
If an effect $E$ has a spectrum containing values in $(0,1)$, it admits nontrivial perturbations within $[0,I]$, yielding a nontrivial convex decomposition.
\end{proof}

However, when we pass to $\Eff(\HH)$-valued distributions, \ie~\povm's, this is no longer the case. In general, extremal \povm's are not \pvm's. The following example is standard \cite{DArianoLoPrestiPerinotti2005,Pellonpaa2011}, and instructive.

\begin{example}[An extreme POVM which is not projective]
The distinction between extreme effects and extreme measurements already appears for qubits.

Let $\mathcal H=\mathbb C^2$, and choose three unit vectors
\[
\psi_1,\psi_2,\psi_3 \in \mathcal H
\]
whose Bloch vectors $n_1,n_2,n_3 \in \mathbb R^3$ lie in a common plane at angles $0$, $2\pi/3$, and $4\pi/3$. Let
\[
P_i = |\psi_i\rangle\langle \psi_i|
\]
be the corresponding rank-one projections. Writing
\[
P_i = \frac{1}{2}(I + n_i \cdot \sigma),
\]
where $\sigma=(\sigma_x,\sigma_y,\sigma_z)$ is the Pauli vector, we have
\[
n_1+n_2+n_3 = 0,
\]
and hence
\[
P_1+P_2+P_3
=
\frac{1}{2}\bigl(3I + (n_1+n_2+n_3)\cdot \sigma \bigr)
=
\frac{3}{2}I.
\]
Define
\[
E_i = \frac{2}{3}P_i, \qquad i=1,2,3.
\]
Then each $E_i$ is an effect but not a projection, and
\[
E_1+E_2+E_3 = I,
\]
so $(E_1,E_2,E_3)$ is a POVM. This is the standard trine measurement.

We claim that this POVM is extreme in the convex set of three-outcome POVMs on $\mathbb C^2$.

Suppose
\[
E_i = r F_i \oplus (1-r) G_i \qquad (0<r<1)
\]
for two POVMs $(F_i)$ and $(G_i)$. Since each $E_i$ has rank one, positivity implies that both $F_i$ and $G_i$ have support contained in the support of $E_i$, hence
\[
F_i = \alpha_i E_i, \qquad G_i = \beta_i E_i
\]
for some scalars $\alpha_i,\beta_i \geq 0$.

Using normalization,
\[
I = \sum_i F_i = \sum_i \alpha_i E_i = \frac{2}{3}\sum_i \alpha_i P_i.
\]
Substituting the Bloch form of the $P_i$, this becomes
\[
I
=
\frac{2}{3}\sum_i \alpha_i \frac{1}{2}(I+n_i\cdot\sigma)
=
\frac{1}{3}\left(\sum_i \alpha_i\right) I
+
\frac{1}{3}\left(\sum_i \alpha_i n_i\right)\cdot \sigma.
\]
Comparing coefficients of $I$ and $\sigma$, we obtain
\[
\sum_i \alpha_i = 3,
\qquad
\sum_i \alpha_i n_i = 0.
\]
Now $n_1,n_2,n_3$ are three unit vectors in a plane at angles $0,2\pi/3,4\pi/3$, so they satisfy a single linear relation, namely
\[
n_1+n_2+n_3=0.
\]
Therefore the condition $\sum_i \alpha_i n_i=0$ implies
\[
\alpha_1=\alpha_2=\alpha_3.
\]
Together with $\sum_i \alpha_i=3$, this gives
\[
\alpha_1=\alpha_2=\alpha_3=1.
\]
Hence $F_i=E_i$ for all $i$. Similarly $G_i=E_i$ for all $i$.

Therefore the trine POVM is extreme.

This example shows that, although the extreme points of the effect algebra $\Eff(\HH)$ are precisely the projections, the extreme points of the distribution space $\Dist_{\Eff(\HH)}(X)$ need not be projective measurements. Extreme POVMs may be genuinely non-projective.
\end{example}

\section{Linear representations of effect algebras and measurement models}

There is a well-developed theory of linear representations for effect algebras and effect modules \cite{GudderPulmannova1998}.
In particular, for effect modules the following result can be distilled from \cite{GudderPulmannova1998}:
\begin{theorem}
There is an equivalence of categories between $\EM$ and $\OV$, the category of real ordered vector spaces with generating order units, and unital ordered linear maps.
This represents an effect module $A$ as $[\Zero, u]_V$, the interval between $\Zero$ and $u$ in the ordered vector space $(V, V^{+})$, where $u \in V^{+}$ generates $V^+$, and $V^+$ generates $V$.
\end{theorem}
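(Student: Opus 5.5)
We construct two functors and show that they are mutually inverse up to natural isomorphism. In the direction $\OV\to\EM$, an object $(V,V^+,u)$ is sent to the interval $[\Zero,u]_V=\{v\in V\mid \Zero\le v\le u\}$. On it we take $a\orth b$ iff $a+b\le u$ and set $a\oplus b=a+b$, $a^\perp=u-a$, with scalar multiplication inherited from $V$. The effect-algebra axioms are routine, and so is the requirement that $a\orth u$ forces $a=0$: from $a\ge \Zero$ and $a+u\le u$ we get $a\le\Zero$, hence $a=\Zero$. The effect-module axioms follow from the vector-space identities. A unital positive linear map $f:V\to W$ restricts to a map $[\Zero,u]_V\to[\Zero,u']_W$ that preserves $\oplus$, $1$ and scalars, so this direction gives a functor $\Phi$.

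The other direction, $\Psi:\EM\to\OV$, is the substance of the argument, and it follows Gudder--Pulmannov\'a. Given an effect module $A$, we first build the cone $K=\{(t,a)\mid t\in\Real_{\ge0},a\in A\}/{\sim}$. The intended reading of $(t,a)$ is $t\cdot a$, and we identify $(t,a)\sim(s,b)$ when $t\cdot a$ and $s\cdot b$ agree after rescaling into $A$. Concretely, for $t,s\le N$ we require $(t/N)\cdot a=(s/N)\cdot b$; the scalar axioms show this does not depend on $N$. On $K$ we define addition by $(t,a)+(s,b)=(t+s,\; \tfrac{t}{t+s}a\oplus\tfrac{s}{t+s}b)$. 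This sum is defined for the same reason as in the convex-combination lemma above: $\tfrac{t}{t+s}a\le \tfrac{t}{t+s}\cdot1$, and the two bounds $\tfrac{t}{t+s}\cdot1$ and $\tfrac{s}{t+s}\cdot1$ are orthogonal. The effect-module axioms then make $K$ a commutative monoid with an action of $\Real_{\ge0}$.

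The next two steps are cancellation and the order. Cancellation in $K$ follows from cancellation in $A$. We let $V=K-K$ be the Grothendieck group, which becomes a real vector space, and take $V^+=K$ and $u=[(1,1)]$. Three properties then need checking:
\begin{enumerate}
\item $V^+$ is a proper cone. If $k+k'=0$, rescale both into $A$; this gives $a\oplus b=0$, and then $a\le 0$ forces $a=0$.
\item $u$ is an order unit. Every $(t,a)$ satisfies $(t,a)\le t\,u$, because $a\le1$ in $A$.
\item The embedding $A\to[\Zero,u]_V$ given by $a\mapsto(1,a)$ is a bijective effect-module isomorphism. Surjectivity is the main point: if $\Zero\le k\le u$, then $u-k\in K$. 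Rescaling puts both $k$ and $u-k$ into $A$, where their sum is $1$, and it follows that $k$ is represented by an element of $A$.
\end{enumerate}
A module morphism $h:A\to B$ extends to $K_A\to K_B$ by $(t,a)\mapsto(t,h(a))$. This extension is additive because $h$ preserves $\oplus$ and scalars, so it extends linearly to $V_A\to V_B$ and is unital and positive.

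Finally, we check that the two composites are naturally isomorphic to the identities. The composite $\Phi\Psi\cong\mathrm{id}$ is exactly item 3 above. For $\Psi\Phi\cong\mathrm{id}$, observe that any $v\in V$ can be written as $v=p-q$ with $p,q\in V^+$, because $V^+$ generates $V$. Each $p\in V^+$ satisfies $p\le Nu$ for some $N$, since $u$ is an order unit, so $p=N\cdot(p/N)$ with $p/N\in[\Zero,u]$. This shows that the interval spans $V$ by positive multiples, and the cone reconstructed from $[\Zero,u]_V$ maps isomorphically onto $V^+$. Naturality in both cases is immediate from the formulas.

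We expect the main obstacle to be in the construction of $\Psi$: checking that addition on $K$ is well defined on equivalence classes and associative, and proving surjectivity onto the interval. Both reduce to repeated use of the scalar axioms $(r+s)\cdot a=r\cdot a\oplus s\cdot a$ and $(rs)\cdot a=r\cdot(s\cdot a)$, together with cancellation. The first approach would be to fix a common normalising constant $N$ and carry out every computation inside $A$ after scaling by $1/N$.
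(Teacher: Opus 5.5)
The paper gives no proof of its own; it attributes the theorem to Gudder--Pulmannov\'a. Your outline follows their construction: formal multiples $(t,a)$, Grothendieck completion, $V^+=K$, $u=[(1,1)]$. The functor $\Phi$, the extension of morphisms, and the $\Psi\Phi\cong\mathrm{id}$ argument are fine.

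\textbf{The gap is in item 3, the step you call the main point.} Write $k=[(t,a)]$ and $u-k=[(s,b)]$, and rescale by $1/N$ with $N\ge\max(t+s,1)$ an integer. This does not give two elements of $A$ summing to $1$. It gives $x\oplus y=(1/N)\cdot 1$ with $x=(t/N)\cdot a$ and $y=(s/N)\cdot b$. To conclude $k=[(1,c)]$ you need some $c\in A$ with $(1/N)\cdot c=x$. That is a divisibility property ($x\le\lambda\cdot 1\Rightarrow x=\lambda\cdot c$), which is not an axiom and which your sketch never derives. Without it, $[0,u]_V$ could a priori be strictly larger than the image of $A$, and ruling that out is the content of the representation theorem.

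The property does hold. Since $1$ is the $N$-fold sum of $(1/N)\cdot 1=x\oplus y$, the $N$-fold sum $c:=x\oplus\cdots\oplus x$ is defined. Then $(1/N)\cdot c=(1/N)\cdot x\oplus\cdots\oplus(1/N)\cdot x=x$, by $r\cdot(a\oplus b)=r\cdot a\oplus r\cdot b$ followed by $(r+s)\cdot a=r\cdot a\oplus s\cdot a$.

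Two smaller lemmas are also missing.
\begin{itemize}
\item Your claim that $\sim$ is independent of $N$, and the injectivity of $a\mapsto[(1,a)]$, both need cancellation of nonzero scalars: $r\cdot a=r\cdot b$ with $r>0$ implies $a=b$. This is not a bare scalar axiom. To prove it, multiply by $1/(nr)$ with $nr\ge 1$ to reduce to $r=1/n$, then add $n$ copies, using $a=(1/n)\cdot a\oplus\cdots\oplus(1/n)\cdot a$.
\item A bijective effect-algebra morphism need not be an isomorphism. For example, $\{0,a,a',1\}\to\{0,\tfrac13,\tfrac23,1\}$ is bijective but does not reflect $\perp$. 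You must also show that $[(1,a)]+[(1,b)]\le u$ forces $a\perp b$. By the surjectivity just established, this sum equals $[(1,c)]$, so $(1/2)\cdot a\oplus(1/2)\cdot b=(1/2)\cdot c$. Then $c=(1/2)\cdot c\oplus(1/2)\cdot c$, and regrouping shows $c=a\oplus b$ is defined.
\end{itemize}

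With these three lemmas supplied, $\Phi\Psi\cong\mathrm{id}$ holds and your argument goes through.
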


In this representation:
\begin{itemize}
\item $\oplus$ is inherited from vector addition, defined when $v+w \le u$;
\item scalar multiplication $r\cdot v$ is the usual multiplication in $V$;
\item convex combinations in $A$ are restrictions of linear combinations in $V$.
\end{itemize}

\subsection{States and duality}

In the vector-space representation, states $\sigma:A\to[0,1]$ correspond precisely to positive unital linear functionals
\[
\sigma:V\to\mathbb R, \qquad \sigma(V_+) \subseteq \mathbb R_{\ge 0}, \quad \sigma(u)=1.
\]

Thus $A$ embeds into the space of functions on its state space via evaluation, and we recover the familiar state/effect duality of ordered vector spaces and generalized probabilistic theories~\cite{janotta2014generalized,plavala2023general}.

Using the ordered-vector-space representation of convex effect
algebras, internal non-contextuality can be formulated as a cone
feasibility problem, with generalized Bell witnesses arising from the
corresponding dual cone.
The quantitative refinement of contextuality can be formulated as an optimization problem generalizing the contextual fraction \cite{AbramskyBrandenburger2011}. 

\subsection{$A$-distributions and $A$-models as cone-valued data}

Let $X$ be a finite set. An element $d\in \Dist_A(X)$ corresponds to a family $(v_x)_{x\in X}\subseteq V_+$ satisfying
\[
\sum_{x\in X} v_x = u.
\]
Thus $\mathcal D_A(X)$ is identified with the convex set
\[
\left\{\, (v_x)_{x\in X}\in V_+^X \;\middle|\; \sum_{x\in X} v_x = u \,\right\}.
\]

Fix a measurement scenario $(X,\mathcal M,O)$ and write $G=\ES(X)$ for the set of global assignments. Under the interval representation, an $A$-model $e$ consists of elements
\[
e_C(s)\in V_+, \qquad C\in \mathcal M,\ s\in \ES(C),
\]
such that
\[
\sum_{s\in \ES(C)} e_C(s)=u
\]
for every context $C$, together with the usual compatibility conditions under restriction.

An $A$-model $e$ is $A$-non-contextual iff there exists a family
\[
d(t)\in V_+, \qquad t\in G,
\]
such that
\[
\sum_{t\in G} d(t)=u
\]
and
\[
e_C(s)=\sum_{t|_C=s} d(t)
\]
for all contexts $C$ and local sections $s\in \ES(C)$. Thus internal non-contextuality is a linear feasibility problem in the ordered vector space $V$, with positivity and normalization constraints.

\subsubsection*{The marginal map}

It is useful to package the above constraints into a single linear map. Define
\[
M_V : V^G \longrightarrow \prod_{C\in\mathcal M} V^{\ES(C)}
\]
by
\[
(M_V d)_C(s) := \sum_{t|_C=s} d(t).
\]
We can write $M_V$ explicitly as a $0/1$ \emph{incidence matrix} as in \cite{AbramskyBrandenburger2011}, with dimensions $|G| \times \sum_{C \in \MM} |\ES(C)|$. We define 
\[ M_V[t,s] = \begin{cases}
      1 & \text{if  $t |_C = s$} \\
      0  & \text{otherwise.}
    \end{cases}
\]
Then an $A$-model $m$ is $A$-non-contextual iff there exists $d\in V_+^G$ such that
\[
\sum_{t\in G} d(t)=u
\qquad\text{and}\qquad
M_V(d)=m.
\]
Hence the set of non-contextual $A$-models is the image under $M_V$ of the positive normalized cone slice in $V^G$.

\subsubsection*{GPT interpretation}

This places the present approach squarely in GPT form. The positive cone $V_+$ plays the role of the cone of unnormalized effects, the order unit $u$ plays the role of the unit effect, and states are positive normalized functionals. An $A$-model is therefore a compatible family of GPT-valued measurements, and observable empirical models are obtained by evaluation under states:
\[
\sigma_*(e_C)(s)=\sigma(e_C(s)).
\]
The distinction between internal and observable contextuality is then the distinction between feasibility in the ordered vector space itself and feasibility after application of positive unital functionals.

\section{Contextual fraction for measurement models}

The ordered-vector-space formulation suggests a direct analogue of the contextual fraction  \cite{AbramskyBrandenburger2011}. In the probabilistic case $A=[0,1]$, the contextual fraction is the largest weight of a non-contextual submodel in a convex decomposition, and can be computed by linear programming \cite{AbramskyBrandenburger2011}. 

For a general effect module $A$, define the \emph{$A$-non-contextual fraction} of an $A$-model $m$ by
\[
\mathrm{NCF}_A(m)
:=
\sup\Bigl\{\, r\in[0,1] \;\Big|\; m = r m^{\mathrm{NC}} \oplus (1-r)m' \} \]
for some $A$-non-contextual model  $m^{\mathrm{NC}}$
and some $A$-model $m'$.

The corresponding \emph{$A$-contextual fraction} is
\[
\mathrm{CF}_A(m):=1-\mathrm{NCF}_A(m).
\]
This definition reduces to the usual contextual fraction when $A=[0,1]$, and is intrinsic to the convex structure of $\Mod_A(\XX)$.

\subsubsection*{Linear optimization form}

In the interval representation $A=[0,u]_V$, the quantity $\mathrm{NCF}_A(m)$ can be formulated as an optimization problem. One seeks a scalar $r\in[0,1]$ and a global family
\[
d(t)\in V_+, \qquad t\in G,
\]
such that
\[
\sum_{t\in G} d(t)= r u
\]
and
\[
\sum_{t|_C=s} d(t)\le m_C(s)
\]
for all contexts $C$ and sections $s\in \ES(C)$, where $\le$ is the order induced by the cone $V_+$.

Intuitively, $d$ specifies a subnormalized non-contextual part of $e$, of total weight $r$. The maximal such $r$ is the non-contextual fraction. In the classical case $V=\mathbb R$, this reduces to the standard linear program for the contextual fraction \cite{AbramskyBarbosaMansfield2017}. In general, it is a cone program over the ordered vector space $V$.

\subsection*{Pointwise contextual fraction profile}

Besides the internal quantity $\mathrm{CF}_A(m)$, one may also consider the \emph{statewise contextual fraction profile}
\[
\sigma \longmapsto \mathrm{CF}\bigl(\sigma_*(m)\bigr), \quad \sigma \in \St(A)
\]
where the right-hand side is the ordinary contextual fraction of the observed empirical model. This yields a function on the state space $\St(A)$.

\begin{proposition}
\label{prop:CF_domination}
Let \(A\) be an effect module, let \(\XX\) be a measurement scenario,
and let
\[
m \in \Mod_A(\XX).
\]
Then for every state \(\sigma \in \St(A)\),
\[
\CF(\sigma_*(m))
\;\leq\;
\CF_A(m).
\]
Equivalently,
\[
\CFobs(m)
:=
\sup_{\sigma\in\St(A)}
\CF(\sigma_*(m))
\;\leq\;
\CF_A(m).
\]
\end{proposition}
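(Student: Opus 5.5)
The plan is to push an arbitrary internal decomposition of $m$ through the state map $\sigma_*$. The inequality $\CF(\sigma_*(m)) \le \CF_A(m)$ is equivalent to $\NCF_A(m) \le \NCF(\sigma_*(m))$. So it suffices to show that every $r$ admissible in the supremum defining $\NCF_A(m)$ is also admissible in the supremum defining the ordinary non-contextual fraction of $\sigma_*(m)$.

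Fix a state $\sigma$. Take $r \in [0,1]$ with $m = r\, m^{\mathrm{NC}} \oplus (1-r)\, m'$, where $m^{\mathrm{NC}}$ is $A$-non-contextual and $m'$ is an $A$-model. The argument has three steps.

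\emph{First}, $\sigma_*$ is convex-linear on $\Mod_A(\XX)$. A state is an effect algebra morphism into $[0,1]$, so it preserves $\oplus$. On an effect module it also preserves scalar multiplication by $[0,1]$. This is the one point needing care, since states are defined only as $\EA$-morphisms. For rational $r = k/n$ it follows from additivity together with the axiom $(r+s)\cdot a = r\cdot a \oplus s\cdot a$. For general $r$ it follows from monotonicity of $\sigma$ and a squeeze between rationals; alternatively one can use the identification of states with positive unital linear functionals on $V$ stated in the section on states and duality. Applying this pointwise to $m_C(s) = r\cdot m^{\mathrm{NC}}_C(s) \oplus (1-r)\cdot m'_C(s)$ gives
\[
\sigma_*(m) = r\,\sigma_*(m^{\mathrm{NC}}) + (1-r)\,\sigma_*(m').
\]

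\emph{Second}, the pieces have the right type. By the earlier proposition that $A$-non-contextuality implies $\sigma$-non-contextuality, $\sigma_*(m^{\mathrm{NC}})$ is a non-contextual empirical model. Moreover, $\sigma_*(m')$ is an empirical model because $\sigma_*$ maps $\Mod_A(\XX)$ into $\Mod_{[0,1]}(\XX)$.

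\emph{Third}, the displayed equation is therefore an admissible decomposition for the standard non-contextual fraction. This gives $r \le \NCF(\sigma_*(m))$.

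Taking the supremum over admissible $r$ yields $\NCF_A(m) \le \NCF(\sigma_*(m))$, and hence $\CF(\sigma_*(m)) \le \CF_A(m)$. Because $\sigma$ was arbitrary, taking the supremum over $\St(A)$ gives the bound on $\CFobs(m)$. The only step requiring real work is the homogeneity of states with respect to scalar multiplication; I expect it to be routine via the $\OV$ representation.
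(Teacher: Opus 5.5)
Your proposal is correct and follows the paper's proof essentially step for step. You push an arbitrary internal decomposition $m = r\,m^{\mathrm{NC}} \oplus (1-r)\,m'$ through $\sigma_*$, use the earlier proposition that $\sigma_*(m^{\mathrm{NC}})$ is non-contextual to get $r \le \NCF(\sigma_*(m))$, and take suprema. Your added justification that states commute with the $[0,1]$-action is sound (first for rationals via additivity, then for general $r$ by a monotonicity squeeze), and it supplies a step the paper uses without proof when it simply applies the state pointwise.
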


\begin{proof}
Suppose
\[
m
=
r\,m^{\mathrm{NC}}
+
(1-r)m'
\]
where \(m^{\mathrm{NC}}\) is \(A\)-non-contextual.
Applying the state \(\sigma\) pointwise gives
\[
\sigma_*(m)
=
r\,\sigma_*(m^{\mathrm{NC}})
+
(1-r)\sigma_*(m').
\]
By Proposition~\ref{context:sec}, since
\(m^{\mathrm{NC}}\) is internally non-contextual,
\(\sigma_*(m^{\mathrm{NC}})\) is probabilistically
non-contextual.
Hence \(r\) is a feasible weight in the definition of
\(\NCF(\sigma_*(m))\), so
\[
\NCF(\sigma_*(m))
\ge r.
\]
Taking the supremum over all internal decompositions yields
\[
\NCF(\sigma_*(m))
\ge
\NCF_A(m),
\]
and so
\[
\CF(\sigma_*(m))
\le
\CF_A(m).
\]
Taking the supremum over states gives the second statement.
\end{proof}

The converse need not hold: statewise decompositions may fail to assemble into a single $A$-valued decomposition, for exactly the same coherence reasons discussed previously in Section~\ref{context:sec} in connection with contextuality.

We now identify conditions for when the internal and statewise contextual fractions coincide. The key requirement is a coherence property for non-contextual decompositions, analogous to the coherence of classical representations from Section~\ref{subsec:coherent}. 

\begin{theorem}[Coherent characterization of the contextual fraction]
\label{thm:CF_coherent_iff}
Let \(A=[0,u]_V\) be an effect module whose states separate points, and
let \(m\in\Mod_A(\XX)\). Define
\[
\NCFobs(m)
:=
\inf_{\sigma\in\St(A)}
\NCF\bigl(\sigma_*(m)\bigr),
\]
so that
\[
\CFobs(m)=1-\NCFobs(m).
\]
Assume that the supremum defining \(\NCF_A(m)\) is attained.

Then the following are equivalent:
\begin{enumerate}
\item
\[
\CF_A(m)=\CFobs(m).
\]

\item There exists a family of subprobability distributions
\[
(p_\sigma)_{\sigma\in\St(A)},
\qquad
p_\sigma:\ES(X)\to[0,1],
\]
such that:
\begin{enumerate}
\item for every \(\sigma\in\St(A)\),
\[
\sum_{t\in\ES(X)}p_\sigma(t)=\NCFobs(m);
\]

\item for every \(\sigma\in\St(A)\), \(p_\sigma\) is a
non-contextual submodel of \(\sigma_*(m)\), i.e.
\[
p_\sigma|_C(s)\leq \sigma(m_C(s))
\qquad
\text{for all } C\in\Sg,\;s\in\ES(C);
\]

\item the family is coherently representable in \(A\): for every
\(t\in\ES(X)\), there exists \(a_t\in A\) such that
\[
p_\sigma(t)=\sigma(a_t)
\qquad
\text{for all } \sigma\in\St(A).
\]
\end{enumerate}
\end{enumerate}
\end{theorem}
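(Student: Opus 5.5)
We prove the two implications separately. Throughout we work in the interval representation $A=[0,u]_V$, write $G=\ES(X)$, and use the domination result Proposition~\ref{prop:CF_domination}, which already gives $\NCF_A(m)\le\NCFobs(m)$. So statement (1) is equivalent to the reverse inequality $\NCF_A(m)\ge\NCFobs(m)$.

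\emph{(1)$\Rightarrow$(2).} Since the supremum defining $\NCF_A(m)$ is attained, let $r:=\NCF_A(m)$ and choose a decomposition $m=r\,m^{\mathrm{NC}}\oplus(1-r)m'$ in which $m^{\mathrm{NC}}$ is $A$-non-contextual. Let $d\in\Dist_A(G)$ be a global section of $m^{\mathrm{NC}}$, and put $a_t:=r\cdot d(t)\in A$ and $p_\sigma(t):=\sigma(a_t)$.
\begin{itemize}
\item Coherence (c) holds by construction.
\item For the total weight, $\sum_t p_\sigma(t)=\sigma(r\cdot u)=r$, and $r=\NCFobs(m)$ by hypothesis (1). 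This gives (a).
\item For each context $C$ and section $s$ we have $(r\cdot d)|_C(s)=r\cdot m^{\mathrm{NC}}_C(s)\le m_C(s)$ in $A$, because the complement is $(1-r)\cdot m'_C(s)$. Since states are monotone, $p_\sigma|_C(s)\le\sigma(m_C(s))$, which is (b).
\end{itemize}
This direction is essentially the first half of the proof of Theorem~\ref{thm:cohclassrep}, carried out at total weight $r$.

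\emph{(2)$\Rightarrow$(1).} Let $r:=\NCFobs(m)$.
\begin{enumerate}
\item Apply Lemma~\ref{lem:statewise_lifting} with $Y=G$, $a=r\cdot u$ and the given representing elements $a_t$. Condition (a) supplies exactly the lemma's hypothesis $\sum_t p_\sigma(t)=r=\sigma(r\cdot u)$. The lemma yields $d:G\to A$ with $\sum_t d(t)=r\cdot u$ and $\sigma(d(t))=p_\sigma(t)$ for all $t$ and $\sigma$.
\item Compute marginals in $V$. For every $C$, $s$ and $\sigma$, condition (b) gives $\sigma\bigl(m_C(s)-d|_C(s)\bigr)\ge 0$.
\item Conclude that $m_C(s)-d|_C(s)\in V_+$, i.e.\ $d|_C(s)\le m_C(s)$ in $A$.
\item Dispose of the degenerate cases. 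If $r=0$ there is nothing to prove. If $r=1$, then $\sum_t d(t)=u$, and summing the inequality $d|_C(s)\le m_C(s)$ over $s$ gives $u\le u$; a sum of positive defects equal to $0$ forces each defect to vanish. Hence $d|_C=m_C$, and $m$ is $A$-non-contextual, so $\NCF_A(m)=1$.
\item For $0<r<1$, set $m^{\mathrm{NC}}_C:=r^{-1}\cdot d|_C$ and $m'_C(s):=(1-r)^{-1}\bigl(m_C(s)-d|_C(s)\bigr)$.
 \begin{itemize}
 \item $m^{\mathrm{NC}}$ is normalized and has global section $r^{-1}d$, so it is $A$-non-contextual.
 \item $m'$ has entries in $V_+$ by step 3, and $\sum_s m'_C(s)=(1-r)^{-1}(u-ru)=u$. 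Its entries are therefore bounded by $u$ and lie in $A$.
 \item Compatibility of $m'$ follows by linearity, since both $m$ and the marginals of $d$ are compatible families.
 \end{itemize}
 Then $m=r\,m^{\mathrm{NC}}\oplus(1-r)m'$, so $\NCF_A(m)\ge r=\NCFobs(m)$. With Proposition~\ref{prop:CF_domination} this gives equality, which is (1).
\end{enumerate}

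The main obstacle is step 3: passing from "$\sigma(v)\ge 0$ for all states $\sigma$" to "$v\in V_+$". Injectivity of the evaluation map $\iota$ alone does not obviously give this; what is needed is that states determine the order, i.e.\ that $V_+$ is the dual cone of $\St(A)$. My first attempt would be to argue that this holds in the Gudder--Pulmannov\'a representation whenever the order unit is Archimedean, which is the standard setting in which states separate points. Otherwise I would read "states separate points" in the order-theoretic sense $\sigma(a)\le\sigma(b)$ for all $\sigma$ $\Rightarrow$ $a\le b$, and state that convention explicitly. The same issue appears in step 4, where each vanishing defect is detected through states. All remaining steps are linear bookkeeping.
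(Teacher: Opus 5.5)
Your proof follows the paper's proof. For (1)$\Rightarrow$(2) you push an attained $A$-valued witness forward along each state. For (2)$\Rightarrow$(1) you lift via Lemma~\ref{lem:statewise_lifting} with $a=\NCFobs(m)\cdot u$, compare marginals statewise, and close with Proposition~\ref{prop:CF_domination}. Your steps 4--5 only make explicit the passage from a subnormalized witness to a decomposition $m=r\,m^{\mathrm{NC}}\oplus(1-r)m'$, which the paper takes for granted by identifying $\NCF_A$ with its cone programme.

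The obstacle you flag in step 3 is real. It is exactly where the paper writes ``since states separate points, this implies $n_C(s)\le m_C(s)$.'' Injectivity of $\iota$ does not give order reflection. Take $V=\mathbb{R}^2$, $V_+=\{(x,y)\mid x,y>0\}\cup\{(0,0)\}$ and $u=(1,1)$. The states $\sigma_\alpha(x,y)=\alpha x+(1-\alpha)y$, $\alpha\in[0,1]$, separate points of $A$. Yet $(\tfrac12,\tfrac12)-(\tfrac14,\tfrac12)=(\tfrac14,0)\notin V_+$, although every state is nonnegative on it. This space is non-Archimedean. So Archimedeanity, or order-determining states, is a genuinely additional hypothesis rather than a consequence of point separation. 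The paper's argument needs it just as much as yours, and with it your proof is complete.

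One small correction: step 4 does not need this hypothesis. There the defects have nonnegative state values summing to $0$, so every state vanishes on each defect, and point separation alone gives $d|_C=m_C$.
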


\begin{proof}
We use the equivalent formulation
\[
\CF_A(m)=\CFobs(m)
\quad\Longleftrightarrow\quad
\NCF_A(m)=\NCFobs(m),
\]
where
\[
\NCFobs(m):=
\inf_{\sigma\in\St(A)}
\NCF(\sigma_*(m)).
\]
Recall that
\[
\CFobs(m)\leq \CF_A(m),
\]
and hence
\[
\NCF_A(m)\leq \NCFobs(m).
\]

Suppose first that there is a coherently representable family
\[
(p_\sigma)_{\sigma\in\St(A)}
\]
of non-contextual subdistributions of the observable models
\(\sigma_*(m)\), all of total weight \(\NCFobs(m)\). Thus
\[
\sum_{t\in\ES(X)}p_\sigma(t)=\NCFobs(m),
\]
and
\[
p_\sigma|_C(s)\leq \sigma(m_C(s))
\qquad
(C\in\Sg,\;s\in\ES(C)).
\]
Coherent representability means that for each \(t\in\ES(X)\) there is
\(a_t\in A\) such that
\[
p_\sigma(t)=\sigma(a_t)
\qquad
\text{for all }\sigma\in\St(A).
\]

Apply Lemma~\ref{lem:statewise_lifting} with
\[
Y=\ES(X),
\qquad
a=\NCFobs(m)\cdot u.
\]
We obtain a unique \(A\)-valued subdistribution
\[
d:\ES(X)\to A
\]
such that
\[
\sum_{t\in\ES(X)}d(t)=\NCFobs(m)\cdot u
\]
and
\[
\sigma(d(t))=p_\sigma(t)
\qquad
(t\in\ES(X),\;\sigma\in\St(A)).
\]

Define
\[
n_C(s):=\sum_{t|_C=s}d(t).
\]
Then \(n\) is an internally non-contextual submodel, witnessed by
\(d\), of weight \(\NCFobs(m)\). Moreover, for every state \(\sigma\),
\[
\sigma(n_C(s))
=
\sum_{t|_C=s}\sigma(d(t))
=
\sum_{t|_C=s}p_\sigma(t)
=
p_\sigma|_C(s)
\leq
\sigma(m_C(s)).
\]
Since states separate points, this implies
\[
n_C(s)\leq m_C(s).
\]
Hence \(d\) witnesses an internal non-contextual submodel of \(m\) of
weight \(\NCFobs(m)\). Therefore
\[
\NCF_A(m)\geq \NCFobs(m).
\]
Together with the general inequality
\[
\NCF_A(m)\leq \NCFobs(m),
\]
we obtain
\[
\NCF_A(m)=\NCFobs(m),
\]
and hence
\[
\CF_A(m)=\CFobs(m).
\]

Conversely, suppose
\[
\CF_A(m)=\CFobs(m),
\]
equivalently
\[
\NCF_A(m)=\NCFobs(m).
\]
By the assumed attainment of \(\NCF_A(m)\), choose an \(A\)-valued
subdistribution
\[
d:\ES(X)\to A
\]
such that
\[
\sum_{t\in\ES(X)}d(t)=\NCF_A(m)\cdot u
\]
and
\[
d|_C(s)\leq m_C(s)
\qquad
(C\in\Sg,\;s\in\ES(C)).
\]
For each state \(\sigma\), define
\[
p_\sigma:=\sigma_*(d).
\]
Then
\[
\sum_{t\in\ES(X)}p_\sigma(t)
=
\sigma\!\left(\sum_t d(t)\right)
=
\sigma(\NCF_A(m)\cdot u)
=
\NCF_A(m)
=
\NCFobs(m).
\]
Also,
\[
p_\sigma|_C(s)
=
\sigma(d|_C(s))
\leq
\sigma(m_C(s)),
\]
so \(p_\sigma\) is a non-contextual subdistribution of
\(\sigma_*(m)\) of weight \(\NCFobs(m)\). Finally, the family is
coherently representable by taking
\[
a_t=d(t).
\]
Thus the required coherent family exists.
\end{proof}

\begin{remark}
The uniform weight in the theorem is not a statewise optimality
requirement. For states \(\sigma\) with
\[
\NCF(\sigma_*(m))>\NCFobs(m),
\]
the subdistribution \(p_\sigma\) need not be optimal. The theorem
requires only that the worst-case observable non-contextual weight can
be realised coherently across all states.
\end{remark}

Thus the gap between the two notions is precisely a failure of coherent representability across states.

\section{Cone duality and generalized Bell witnesses}

The cone-program formulation of contextuality admits a natural dual
description generalizing Bell inequalities.

Let \(A=[0,u]_V\) be an effect module represented as the unit interval in
an ordered vector space \((V,V_+)\), and let
\[
G=\ES(X)
\]
be the set of global assignments for a measurement scenario
\(\XX=(X,\mathcal M,O)\).

Recall that an \(A\)-model \(m\) is \(A\)-non-contextual iff there exists
\[
d\in V_+^G
\]
such that
\[
\sum_{t\in G} d(t)=u
\]
and
\[
(M_V d)_C(s)
=
\sum_{t|_C=s} d(t)
=
m_C(s).
\]

The \(A\)-non-contextual fraction can be formulated as the cone programme
\[
\mathrm{NCF}_A(m)
=
\sup
\left\{
\lambda\in[0,1]
\;\middle|\;
\exists d\in V_+^G,\;
M_V d \le m,\;
\sum_{t\in G} d(t)=\lambda u
\right\}.
\]

The dual cone \(V_+^\ast\subseteq V^\ast\) consists of the positive linear
functionals on \(V\):
\[
V_+^\ast
=
\{
\varphi:V\to\mathbb R
\mid
\varphi(v)\ge 0
\;\;
\forall v\in V_+
\}.
\]

Applying cone duality yields generalized Bell witnesses.
These consist of families
\[
\beta_{C,s}\in V_+^\ast,
\qquad
C\in\mathcal M,\;
s\in\ES(C),
\]
which define a linear functional on measurement models:
\[
\langle \beta,m\rangle
:=
\sum_{C\in\mathcal M}
\sum_{s\in\ES(C)}
\beta_{C,s}(m_C(s)).
\]

Such a family defines a valid generalized Bell inequality if there exists
\(\alpha\in\mathbb R\) such that
\[
\langle \beta,m\rangle
\le \alpha
\]
for every \(A\)-non-contextual model \(m\).

Thus contextuality witnesses arise as elements of the dual ordered
structure:
\[
\boxed{
\text{generalized Bell witnesses}
=
V_+^\ast\text{-valued Bell functionals}.
}
\]

In the classical probabilistic case \(A=[0,1]\), we have
\[
V=\mathbb R,
\qquad
V_+^\ast=\mathbb R_{\ge 0},
\]
and the above reduces to the usual LP duality description of Bell
inequalities and contextuality inequalities.

For quantum effects,
\[
A=\Eff(\mathcal H),
\]
the ordered vector space is the self-adjoint part of
\(B(\mathcal H)\), and 
the dual cone consists of positive linear functionals, equivalently
positive operators via the trace pairing.

Generalized Bell witnesses therefore take the form
\[
\sum_{C,s}
\mathrm{Tr}(B_{C,s}\,m_C(s)),
\qquad
B_{C,s}\ge 0.
\]

These witnesses detect contextuality internally at the level of
effect-valued measurement models, before evaluation under states.
Ordinary Bell inequalities arise only after applying a state
\[
\sigma:A\to[0,1],
\]
which maps the generalized witness to a scalar-valued inequality for the
observable empirical model \(\sigma_\ast(m)\).

\begin{example}[Tsirelson operators as internal Bell witnesses]
Consider the CHSH Bell scenario with binary observables
\[
A_0,A_1
\qquad\text{and}\qquad
B_0,B_1,
\]
taking values in \(\{\pm1\}\).

In the standard probabilistic setting, the CHSH Bell functional is
\[
\langle \mathcal B \rangle
=
\langle A_0B_0\rangle
+
\langle A_0B_1\rangle
+
\langle A_1B_0\rangle
-
\langle A_1B_1\rangle,
\]
and classical hidden-variable models satisfy the Bell inequality
\[
|\langle \mathcal B\rangle|
\le 2.
\]

In quantum theory, one associates to these observables the CHSH operator
\[
\mathcal B
=
A_0\otimes B_0
+
A_0\otimes B_1
+
A_1\otimes B_0
-
A_1\otimes B_1.
\]
Tsirelson's theorem states that
\[
\|\mathcal B\|
\le 2\sqrt 2,
\]
and this bound is attained.

From the present perspective, the important point is that
\(\mathcal B\) is naturally an \emph{internal} contextuality witness at
the operator level, prior to evaluation under states.

Indeed, in the effect-valued framework, a measurement model assigns
effects
\[
m_C(s)\in \Eff(\HH)
\]
to local events \((C,s)\). Cone duality yields generalized Bell
witnesses given by positive linear functionals on the ordered vector
space of effects, equivalently by positive operators via the trace
pairing. Thus a generalized Bell functional takes the form
\[
\Phi_W(m)
=
\sum_{C,s}
\Tr(W_{C,s}\,m_C(s)),
\qquad
W_{C,s}\ge 0.
\]

The CHSH operator may be viewed as a special case of such an internal
witness. Given a quantum state \(\rho\), evaluation under the Born rule
yields the observable quantity
\[
\Tr(\rho \mathcal B),
\]
recovering the usual CHSH expectation value.

Thus ordinary Bell inequalities arise only after applying states to
internal operator-valued witnesses. In this sense, Tsirelson-type
operators are the internal counterparts of Bell inequalities in the
effect-valued setting.
\end{example}

We now show how a core result on the contextual fraction for probabilistic empirical models from \cite{AbramskyBarbosaMansfield2017} generalizes to effect-valued models.

\begin{theorem}[Dual characterization of the \(A\)-contextual fraction]
Let \(A=[0,u]_V\) be an effect module represented in an ordered real
vector space \((V,V_+)\), and let \(m\in\Mod_A(\XX)\). Suppose that the
cone programme defining \(\NCF_A(m)\) satisfies strong duality and that
the dual optimum is attained. Then there exists a feasible dual witness
\((\alpha,\beta)\) such that
\[
1-
\sum_{C,s}\beta_{C,s}(m_C(s))
=
\CF_A(m).
\]
Equivalently, the \(A\)-contextual fraction of \(m\) is exactly the
maximal violation of a generalized Bell inequality over the dual cone.
\end{theorem}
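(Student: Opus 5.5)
The plan is to write the primal cone programme for \(\NCF_A(m)\) in standard conic form, derive its Lagrangian dual explicitly, and then read off the claimed identity from strong duality plus dual attainment. This mirrors the linear-programming argument of \cite{AbramskyBarbosaMansfield2017}, with \(\mathbb R_{\ge 0}\) replaced by \(V_+\) and its dual cone \(V_+^\ast\).

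\textbf{Primal.} The primal variables are \(\lambda\in\mathbb R\) and \(d\in V_+^G\). We maximise \(\lambda\) subject to two constraints:
\[
m_C(s)-(M_Vd)_C(s)\in V_+ \quad\text{for all } (C,s),
\qquad
\lambda u-\sum_{t\in G} d(t)=0 .
\]
The bound \(\lambda\le 1\) is automatic. For any \(C\) we have \(\sum_s (M_Vd)_C(s)=\sum_t d(t)=\lambda u\) and \(\sum_s m_C(s)=u\). Summing the first constraint over \(s\) therefore gives \((1-\lambda)u\in V_+\).

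\textbf{Dual.} Attach a multiplier \(\beta_{C,s}\in V_+^\ast\) to each inequality constraint and a multiplier \(\gamma\in V^\ast\) to the equality constraint. The Lagrangian is
\[
\lambda+\sum_{C,s}\beta_{C,s}\bigl(m_C(s)-(M_Vd)_C(s)\bigr)+\gamma\Bigl(\sum_t d(t)-\lambda u\Bigr).
\]
Collect the coefficients of \(\lambda\) and of each \(d(t)\). Boundedness of the supremum over \(\lambda\in\mathbb R\) and \(d\in V_+^G\) then forces the dual constraints
\[
\gamma(u)=1,
\qquad
\gamma-\sum_{C}\beta_{C,t|_C}\in -V_+^\ast \quad\text{for all } t\in G .
\]
The dual problem is to minimise \(\sum_{C,s}\beta_{C,s}(m_C(s))\) subject to these constraints. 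Weak duality, \(\lambda\le\sum\beta_{C,s}(m_C(s))\), follows by the usual one-line computation: pair each feasible primal point with each feasible dual point and use positivity of \(\beta\) on \(V_+\).

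\textbf{Witness.} By the hypotheses of strong duality and dual attainment, there is a feasible \((\gamma,\beta)\) with \(\sum_{C,s}\beta_{C,s}(m_C(s))=\NCF_A(m)\). Hence
\[
1-\sum_{C,s}\beta_{C,s}(m_C(s))=1-\NCF_A(m)=\CF_A(m).
\]
It remains to interpret \((\alpha,\beta)\) as a generalized Bell inequality. Take \(\alpha\) to be the bound coming from the dual constraints. For a non-contextual model \(n=M_Vd\) with \(\sum_t d(t)=u\), the constraint \(\sum_C\beta_{C,t|_C}\ge\gamma\) on \(V_+\) gives
\[
\langle\beta,n\rangle=\sum_t\sum_C\beta_{C,t|_C}(d(t))\ge\sum_t\gamma(d(t))=\gamma(u)=1 .
\]
So every non-contextual model satisfies \(\langle\beta,n\rangle\ge 1\), and \(m\) falls short of this bound by exactly \(\CF_A(m)\).

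The inequality as produced is of the form \(\ge 1\), whereas the paper's convention in the text preceding the theorem is \(\langle\beta,m\rangle\le\alpha\). To match it, pass to \(-\beta\) and \(\alpha=-1\), or equivalently rewrite the Bell expression in terms of complements. This is where "maximal violation" gets its meaning. For maximality, weak duality shows that no feasible dual witness gives a violation exceeding \(\CF_A(m)\), and the optimal one attains it.

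\textbf{Main obstacle.} The delicate step is deriving the dual correctly in the infinite-dimensional or ordered-vector-space setting, in particular identifying the dual of the equality constraint \(\sum_t d(t)=\lambda u\) and making sure the dual cone constraints come out in the right direction. I would handle this concretely by checking that the case \(V=\mathbb R\) reproduces the LP dual in \cite{AbramskyBarbosaMansfield2017}. Strong duality itself is assumed, so no Slater-type argument is needed. Only the normalisation \(\gamma(u)=1\) and the sign conventions for \(\alpha\) need care.
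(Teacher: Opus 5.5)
Your proposal is correct and follows the paper's proof essentially step for step. It uses the same dual programme (your $\gamma$ is the paper's $\alpha$, with $\gamma(u)=1$ and $\sum_C\beta_{C,t|_C}-\gamma\in V_+^\ast$), the same appeal to strong duality with dual attainment, and the same computation showing that every non-contextual model satisfies $\langle\beta,n\rangle\ge 1$; your Lagrangian derivation and weak-duality check just make explicit what the paper states without derivation.

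One small caveat on the sign convention: passing to $-\beta$ takes the coefficients out of $V_+^\ast$. To get the $\le\alpha$ form with positive coefficients, make your ``complements'' alternative concrete by taking $\beta'_{C,s}:=\sum_{s'\neq s}\beta_{C,s'}$. Since $\sum_s n_C(s)=u$, this gives $\langle\beta',n\rangle\le\sum_C\sum_s\beta_{C,s}(u)-1$ for every non-contextual $n$, and $m$ violates this bound by exactly $\CF_A(m)$.
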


\begin{proof}
Recall that \(\NCF_A(m)\) is defined by the conic programme
\[
\NCF_A(m)
=
\sup
\left\{
\lambda
\;\middle|\;
d\in V_+^{\ES(X)},\;
M_Vd\leq m,\;
\sum_{t\in\ES(X)}d(t)=\lambda u
\right\}.
\]
Here \(M_V\) is the marginal map
\[
(M_Vd)_C(s)=\sum_{t|_C=s}d(t).
\]

The corresponding dual programme has variables
\[
\beta_{C,s}\in V_+^*,
\qquad
\alpha\in V^*,
\]
and is
\[
\inf
\sum_{C,s}\beta_{C,s}(m_C(s))
\]
subject to
\[
\alpha(u)=1
\]
and, for every global assignment \(t\in\ES(X)\),
\[
\sum_{C\in\MM}\beta_{C,t|_C}-\alpha\in V_+^*.
\]

By the assumed strong duality and dual attainment, there is a feasible
dual solution \((\alpha,\beta)\) such that
\[
\sum_{C,s}\beta_{C,s}(m_C(s))
=
\NCF_A(m).
\]

We now interpret this dual solution as a generalized Bell witness.
Let \(n\in\Mod_A(\XX)\) be internally non-contextual. Then there exists
\[
d\in\Dist_A(\ES(X))
\]
such that
\[
M_Vd=n
\qquad\text{and}\qquad
\sum_{t\in\ES(X)}d(t)=u.
\]
Using dual feasibility, we compute
\[
\begin{aligned}
\sum_{C,s}\beta_{C,s}(n_C(s))
&=
\sum_{C,s}\beta_{C,s}\bigl((M_Vd)_C(s)\bigr)\\
&=
\sum_{t\in\ES(X)}
\left(\sum_{C\in\MM}\beta_{C,t|_C}\right)(d(t))\\
&\geq
\sum_{t\in\ES(X)}\alpha(d(t))\\
&=
\alpha\!\left(\sum_{t\in\ES(X)}d(t)\right)\\
&=
\alpha(u)\\
&=
1.
\end{aligned}
\]
Thus every internally non-contextual \(A\)-model satisfies the generalized
Bell inequality
\[
\sum_{C,s}\beta_{C,s}(n_C(s))\geq 1.
\]

For the given model \(m\), optimality gives
\[
\sum_{C,s}\beta_{C,s}(m_C(s))
=
\NCF_A(m).
\]
Hence the violation of this inequality by \(m\), measured in the
normalization used above, is
\[
1-
\sum_{C,s}\beta_{C,s}(m_C(s))
=
1-\NCF_A(m)
=
\CF_A(m).
\]
Therefore \(m\) has a generalized Bell witness whose violation exactly
saturates its \(A\)-contextual fraction.
\end{proof}

\begin{remark}
For \(A=[0,1]\), we have \(V=\mathbb R\) and
\(V_+^*=\mathbb R_{\geq 0}\). The cone programme is then an ordinary
linear programme, and the theorem specializes to the duality theorem for
the contextual fraction \cite{AbramskyBarbosaMansfield2017}: every empirical model has a Bell inequality
whose normalized violation is exactly its contextual fraction.
\end{remark}

\subsection*{Discussion}
In finite-dimensional polyhedral cases, the required strong duality
hypothesis is automatic. For general ordered vector spaces or
non-polyhedral cones, such as positive semidefinite cones, one should
state the usual conic regularity assumptions, or else work with the
closed conic hull/facial reduction. Under these standard hypotheses the
dual witnesses above are the precise effect-valued analogue of Bell
inequalities.

In the finite-dimensional quantum case \(A=\Eff(\mathcal H)\), the
ordered vector space is \(B(\mathcal H)_{\mathrm{sa}}\) with positive
semidefinite cone. The cone programme for \(\NCF_A\) is therefore a
semidefinite programme. Since the cone is closed and finite-dimensional,
and the feasible set is compact, optima are attained; under the standard
Slater/regularity hypotheses of semidefinite programming, strong duality
holds. Hence the dual-witness theorem applies, yielding
operator-valued Bell witnesses saturating \(\CF_A(m)\).

In the infinite-dimensional quantum case, the ordered vector space
\(B(\mathcal H)_{\mathrm{sa}}\) is no longer finite-dimensional, and
compactness arguments used in semidefinite programming no longer apply.
The dual cone is naturally identified with the positive trace-class
operators via the trace pairing. Consequently the dual witnesses become
trace-class operator-valued Bell functionals.

Under suitable topological regularity conditions ensuring strong conic
duality and attainment --- for example weak-* closedness together with
appropriate Slater-type conditions --- the dual-witness theorem extends
to this setting.

\begin{remark}
The constructions of contextual fractions and dual Bell witnesses require
convex structure, and hence apply to effect modules rather than general
effect algebras.

In particular, the projection effect algebra
\(\Proj(\mathcal H)\) is not convex: if \(P\) is a nontrivial
projection and \(0<r<1\), then \(rP\) is not a projection.
Consequently, while \(\Proj(\mathcal H)\)-valued measurement models and
their contextuality are perfectly well-defined, there is no intrinsic
notion of convex decomposition or contextual fraction for such models.

This sharply separates the additive/local-to-global aspects of the
theory, which depend only on the effect-algebra structure, from the
convex-analytic aspects, which require effect modules.
\end{remark}

\section{Sharpness and dilation}

The main motivation for the introduction of effect algebras was to allow for unsharp effects  and measurements. In the classical case, probabilistic and fuzzy predicates generalize the ``crisp'' ones familiar from logic, while in the quantum case, quantum effects generalize projections, and \povm's generalize \pvm's.

We define an element $a$ of an effect algebra $A$ to be \emph{sharp} if $a \wedge a^\bot = 0$. (We are not assuming that meets exist in general, but rather asserting that the meet exists and is equal to $0$ in this case.)
In the case that the partial order on the effect algebra is a lattice, the sharp elements form an orthomodular lattice \cite{FoulisBennett1994}, recovering the connection with Birkhoff--von Neumann style quantum logic.

In the case of the quantum effects $\Eff(\HH)$, the sharp elements are the projections $\Proj(\HH)$. Moreover, this effect algebra is \emph{sharply dominating} \cite{gudder1998s}, meaning that every element has a smallest sharp element above it in the order.
For further results on sharp elements in effect algebras, see e.g.~\cite{rievcanova2001sharp}.

We shall look at more global questions, at the level of measurement models.
In particular, we are motivated by the following question:
\begin{question}
Which empirical models are physically realizable in quantum theory?
\end{question}
Most work on contextuality has focussed on realization by \pvm-models. The obvious generalization to accommodate unsharp measurements would be to consider realization by \povm-models. However, we have the following simple observation:
\begin{proposition}
Let $A$ be an effect module. Every empirical model is $A$-realizable with respect to any state $\sigma : A \to [0,1]$.
In particular, every empirical model is realized by a \povm-model in any Hilbert space dimension.
\end{proposition}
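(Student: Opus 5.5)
The plan is to realize a given empirical model $e$ by a \emph{trivial} (scalar-valued) $A$-model, using the convex structure of $A$ to embed $[0,1]$ into $A$ via $r \mapsto r\cdot 1$. Write $e = \{e_C\}_{C\in\Sg}$ with $e_C \in \Dist_{[0,1]}(\ES(C))$. I would define $m_C(s) := e_C(s)\cdot 1 \in A$ for each context $C$ and section $s \in \ES(C)$, and then show two things: that $m$ is an $A$-model, and that $\sg_*(m) = e$ for every state $\sg$.

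First, I would check that the map $j : [0,1]\to A$, $j(r) = r\cdot 1$, is an effect algebra morphism. Preservation of the unit is the axiom $1\cdot a = a$. For sums, if $r+s\le 1$, then $(r+s)\cdot 1 = (r\cdot 1)\oplus(s\cdot 1)$ by the last effect-module axiom; in particular the sum on the right is defined. By induction, finite orthogonal sums are preserved, so $\bigoplus_{s} e_C(s)\cdot 1 = (\sum_s e_C(s))\cdot 1 = 1$. Thus $m_C = \Dist_{j,\ES(C)}(e_C) \in \Dist_A(\ES(C))$. Compatibility then follows from naturality of $\Dist_j : \Dist_{[0,1]} \natarrow \Dist_A$ established in the embedding section, since restriction is pushforward along the projection $\ES(C') \to \ES(C)$. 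Hence $m = j_*(e) \in \Mod_A(\XX)$.

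Second, I would compute $\sg(r\cdot 1)$ for an arbitrary state $\sg$. The composite $\sg\circ j : [0,1]\to[0,1]$ is an effect algebra morphism. Such a map is additive on $[0,1]$ and monotone, since it preserves $\le$, which is defined from $\oplus$. It also sends $1$ to $1$. Additivity gives $\sg(j(1/n)) = 1/n$ and hence agreement on rationals, and monotonicity then forces $\sg\circ j = \mathrm{id}$. I expect this step to be the only one requiring care, namely the Cauchy-type argument that an additive monotone unital map $[0,1]\to[0,1]$ is the identity. It is routine, and I would sandwich an arbitrary $r$ between rationals $p \le r \le q$. Consequently $\sg_*(m)_C(s) = \sg(e_C(s)\cdot 1) = e_C(s)$, i.e.\ $\sg_*(m) = e$. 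Since $\sg$ was arbitrary, $e$ is $A$-realizable with respect to any state, provided states exist. If $\St(A)$ is empty the claim is vacuous, while for $\Eff(\HH)$ states exist by Theorem~\ref{th:Busch}.

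For the \povm\ clause, I would take $A = \Eff(\HH)$ for any $\HH$, where $j(r) = rI$. The model becomes $m_C(s) = e_C(s)\,I$, a commutative, trivially jointly measurable family of \povm's, and $\Tr(\rho\, e_C(s) I) = e_C(s)$ for any density operator $\rho$. This realizes every empirical model, including contextual ones and PR boxes, in any dimension, even $\dim\HH = 1$. The remark to make afterwards is that this triviality is exactly why \povm-realizability is too weak a notion, motivating the sharpness and dilation analysis that follows.
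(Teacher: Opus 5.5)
Your proof is correct and uses the same construction as the paper, which also sets $m_C(s) := e_C(s)\cdot 1$ and computes $\sigma(e_C(s)\cdot 1) = e_C(s)$. Your version is slightly more careful in two places. The paper simply asserts that $\sigma$ is a morphism of effect modules, although states were defined only as effect-algebra morphisms; your argument that $\sigma\circ j:[0,1]\to[0,1]$ is additive, monotone and unital, hence the identity, supplies exactly that missing step. Your appeal to naturality of $\Dist_j$ also makes explicit the compatibility the paper calls ``easily seen''.
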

\begin{proof}
Given an empirical model $e$, define an $A$-model $m$ by $m_{C}(s) := e_{C}(s) \cdot 1$. Compatibility of $m$ is easily seen to follow from that of $e$. Given any state $\sigma$, using the fact that $\sigma$ is a morphism of effect modules:
\[ \sigma(m_C(s)) = \sigma(e_C(s) \cdot 1) = e_C(s) \cdot \sigma(1) = e_C(s) \cdot 1 = e_C(s) . \]
\end{proof}
The case for \povm-models is well-known in the literature \cite{liang2011specker,yu2013quantum,kunjwal2014minimal}, albeit usually phrased somewhat differently.
While realizability by a \pvm-model is a highly non-trivial condition (e.g.~it is undecidable if the Hilbert space dimension is not fixed \cite{slofstra2019set,abramsky2026team}), realizability by a \povm-model allows us to ``cheat'' by putting all the information in the convex structure. (Operationally, this amounts to ``stochastic pre-processing''. Here, this pre-processing is the whole game; what follows is a no-op.)
Thus additional constraints are needed to filter out these trivial realizations.

\subsection{Dilation in measurement models}

At first sight, the fact that there are trivial \povm-realizations where no \pvm-realizations exist seems puzzling, in the light of well-known results on dilation.

The basic result for \povm's (Naimark dilation \cite{busch2016dilation}) can be stated  as follows:
\begin{theorem}
Let $R \in \Dist_{\Eff(\HH)}(X)$ be a \povm. There is an isometry $V : \HH \to \KK$ and a \pvm~$P \in \Dist_{\Proj(\KK)}(X)$ such that
\[ R(x) = V^* P(x) V, \quad x \in X . \]
\end{theorem}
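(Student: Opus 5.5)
The plan is to prove Naimark dilation by the standard explicit construction, taking $X$ to be a discrete outcome set; only the finitely many $x$ with $R(x)\neq 0$ matter, since $R$ has finite support. Set $\KK := \HH \otimes \ell^2(X)$, equivalently $\bigoplus_{x\in X}\HH$, with orthonormal basis $\{\ket{x}\}$ for $\ell^2(X)$. Define
\[ V\psi := \sum_{x\in X} R(x)^{1/2}\psi \otimes \ket{x} , \qquad P(x) := I_{\HH}\otimes \ket{x}\bra{x} . \]
Here $R(x)^{1/2}$ is the positive square root of the effect $R(x)$, given by the continuous functional calculus. The sum is finite, so $V$ is a well-defined bounded linear map.

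The first step is to check that $V$ is an isometry. Compute
\[ \langle V\psi, V\phi\rangle = \sum_x \langle R(x)^{1/2}\psi, R(x)^{1/2}\phi\rangle = \langle \psi, \textstyle\sum_x R(x)\,\phi\rangle = \langle\psi,\phi\rangle , \]
using self-adjointness of $R(x)^{1/2}$ and the normalization $\bigoplus_x R(x) = I$. As noted earlier, in $\Eff(\HH)$ this orthogonal sum is the ordinary operator sum.

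The second step is to check that $P$ is a \pvm, i.e.\ an element of $\Dist_{\Proj(\KK)}(X)$. Each $P(x)$ is an orthogonal projection, and $P(x)P(y)=0$ for $x\neq y$, so the family is pairwise orthogonal in $\Proj(\KK)$. However, $P$ need not have finite support, since every $P(x)\neq 0$. To stay inside $\Dist$ I would shrink the index set: take $\ell^2(S)$ with $S=\supp(R)$ in place of $\ell^2(X)$, and set $P(x)=0$ for $x\notin S$. Then $\sum_{x\in S} P(x) = I_{\KK}$, and the orthogonal sum in the sense of the paper is defined and equals $1$.

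The final step is the identity $V^*P(x)V = R(x)$. The adjoint is $V^*(\phi\otimes\ket{y}) = R(y)^{1/2}\phi$. Hence $V^*P(x)V\psi = V^*(R(x)^{1/2}\psi\otimes\ket{x}) = R(x)\psi$ for $x\in S$, and both sides are $0$ for $x\notin S$.

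The only point requiring care is not analytic but bookkeeping. It is to make sure the dilated family really lies in $\Dist_{\Proj(\KK)}(X)$, which requires finite support and a sum equal to the identity. Restricting to $\supp(R)$ handles both. Everything else is routine functional calculus.
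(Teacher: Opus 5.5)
Your proof is correct. Take $\KK=\HH\otimes\ell^2(S)$ with $S=\mathrm{supp}(R)$. Then $V\psi=\sum_{x\in S}R(x)^{1/2}\psi\otimes|x\rangle$ is an isometry because $\sum_x R(x)=I$, the operators $P(x)=I\otimes|x\rangle\langle x|$ (set to $0$ off $S$) form a finitely supported \pvm{} in $\Dist_{\Proj(\KK)}(X)$, and $V^*P(x)V=R(x)$. Restricting to the support is exactly the bookkeeping needed for the paper's finite-support convention. The paper gives no proof of its own here; it simply invokes Naimark's dilation theorem as a standard result, and your argument is the standard explicit square-root form of that theorem.
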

As an immediate consequence, for any state $\sigma \in \St(\Eff(\HH))$, there is a state $\sigma' \in \St(\Proj(\KK))$ such that for all $x \in X$, $\sigma(R(x)) = \sigma'(P(x))$.
Indeed, by Theorems~\ref{th:Gleason} , \ref{th:Busch}, $\sigma$ is induced by a density operator $\rho$ on $\HH$, and
$\sigma(R(x)) = \Tr(\rho R(x)) = \Tr(\rho V^* P(x) V) = \Tr(V \rho V^* P(x))$, so we can take $\sigma'$ to be the state induced by $V \rho V^*$. Thus at the level of individual measurements, the statistics produced by any \povm~on a given state can be reproduced by a \pvm~and a  suitable state on a larger Hilbert space.

This can be extended to multiple measurements. We have the following generalization  \cite{paulsen2016entanglement}:
\begin{theorem}
\label{th:multidilations}
Let $R_1, \ldots , R_n \in \Dist_{\Eff(\HH)}(X)$ be  \povm's. There is an isometry $V : \HH \to \KK$, and \pvm's~$P_1, \ldots , P_n \in \Dist_{\Proj(\KK)}(X)$, such that
\[ R_i(x) = V^* P_i(x) V, \quad x \in X, \;\; i = 1, \dots , n . \]
\end{theorem}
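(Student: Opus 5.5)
The natural approach is to reduce the multi-measurement statement to the single-measurement Naimark theorem by dilating each \povm~separately and then assembling the pieces into one common isometry. For each $i$, the single-\povm~dilation (or directly the Stinespring-type construction) gives a Hilbert space $\KK_i$, an isometry $V_i:\HH\to\KK_i$ and a \pvm~$Q_i\in\Dist_{\Proj(\KK_i)}(X)$ with $R_i(x)=V_i^*Q_i(x)V_i$. Concretely, I would take $\KK_i=\HH\otimes\ell^2(X)$ (only the finitely many $x$ in the support matter), $V_i\psi=\sum_x R_i(x)^{1/2}\psi\otimes e_x$, and $Q_i(x)=I\otimes|e_x\rangle\langle e_x|$. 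Checking $V_i^*V_i=\sum_x R_i(x)=I$ confirms that $V_i$ is an isometry, and $V_i^*Q_i(x)V_i=R_i(x)$ is immediate.

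The main obstacle is that these isometries all differ, whereas the theorem demands a \emph{single} isometry $V$ serving every $i$. The standard fix, following \cite{paulsen2016entanglement}, is to use a common ambient space in which each $V_i$ can be ``rotated'' to a fixed isometry. I would take $\KK$ large enough: for instance $\KK=\HH\otimes\ell^2(X)$, independent of $i$, with the fixed isometry $V\psi=\psi\otimes e_{x_0}$ for some chosen $x_0$. Then, for each $i$, I would extend $V_i V^*$, which maps $V\HH$ isometrically onto $V_i\HH$, to a unitary $U_i$ on $\KK$ with $U_iV=V_i$. This requires the orthogonal complements $(V\HH)^\perp$ and $(V_i\HH)^\perp$ to have equal dimension. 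In finite dimensions this holds automatically, since both subspaces have the same dimension inside the same $\KK$. In infinite dimensions I would first tensor with an auxiliary infinite-dimensional space, e.g.\ replace $\KK$ by $\KK\otimes\ell^2(\mathbb N)$ and each $Q_i(x)$ by $Q_i(x)\otimes I$, so that both complements have the same (infinite) Hilbert dimension.

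With $U_i$ in hand, I would set $P_i(x):=U_i^*Q_i(x)U_i$. Since unitary conjugation preserves projections, orthogonality and the sum to $I$, each $P_i\in\Dist_{\Proj(\KK)}(X)$. Then $V^*P_i(x)V=(U_iV)^*Q_i(x)(U_iV)=V_i^*Q_i(x)V_i=R_i(x)$ for all $x$ and $i$, which gives the result. The only genuinely delicate step is the dimension-matching needed to extend the partial isometries to unitaries, and I would handle it uniformly by the ampliation trick described above.
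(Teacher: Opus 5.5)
Your proof is correct. The paper gives no argument of its own for this theorem; it only cites Paulsen. So there is no internal proof to match, and yours is a complete, elementary justification of a step the paper takes on trust.

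\textbf{Comparison with the standard route.} The usual operator-algebraic proof extends $e_{i,x}\mapsto R_i(x)$ to a unital completely positive map on the universal C$^*$-algebra generated by $n$ \pvm's, and then takes its Stinespring dilation. You instead use only the explicit Naimark isometries $V_i\psi=\sum_x R_i(x)^{1/2}\psi\otimes e_x$ into one fixed space, and align them with a single $V$ by unitaries $U_i$.
\begin{itemize}
\item \emph{What your route buys.} It is concrete. In finite dimensions $\KK=\HH\otimes\ell^2(S)$ has dimension $|S|\dim\HH$, independent of $n$. Every $P_i$ is a unitary conjugate of the one fixed \pvm~$x\mapsto I\otimes|e_x\rangle\langle e_x|$.
\item \emph{What the Stinespring route buys.} It gives a canonical minimal dilation carried by a single $*$-representation. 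That route, via maximal tensor products, can be made to preserve commutation between measurements, as in the commuting-operator remark later in the paper.
\item \emph{Limitation of unitary alignment.} It gives no control over commutators among the $P_i$. This matches the paper's point that separately dilated families need not form a compatible model.
\end{itemize}

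\textbf{Points of precision.}
\begin{itemize}
\item \emph{Finite support.} Set $S=\bigcup_i\supp(R_i)$, pick $x_0\in S$, and use $\KK=\HH\otimes\ell^2(S)$ rather than $\HH\otimes\ell^2(X)$. For infinite $X$ the family $I\otimes|e_x\rangle\langle e_x|$, $x\in X$, is not finitely supported, so it would not lie in $\Dist_{\Proj(\KK)}(X)$.
\item \emph{The ampliation step.} You must also replace $V,V_i$ by $V\otimes e_0$ and $V_i\otimes e_0$. Then $\KK\otimes e_0^{\perp}$ lies inside both orthogonal complements, and both lie inside $\KK\otimes\ell^2(\mathbb N)$, so they have equal Hilbert dimension.
\item \emph{The ampliation is unnecessary, though harmless.} Write $\HH\otimes\ell^2(S)=\bigoplus_{x\in S}\HH$. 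The diagonal blocks of $I-V_iV_i^*$ are $I-R_i(x)$, and they sum to $(|S|-1)I$, exactly as for $I-VV^*$. A trace count in finite dimensions, or a rank count in infinite ones, then forces both complements to have Hilbert dimension $(|S|-1)\dim\HH$.
\end{itemize}
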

This result is stated for convenience for the case where all the \povm's have the same set of outcomes. It can be extended to the case where each $R_i$ has outcome set $X_i$, by forming the common outcome set $X := \bigcup_i X_i$, and extending each \povm~$R_i$ to $X$ by assigning $0$ to all outcomes $x \in X \setminus X_i$.

We could try to apply this result to \povm-models $\{ R_C \}_{C \in \Sg}$ in $\Mod_{\Eff(\HH)}(\XX)$.  We would indeed obtain a family of \pvm's $\{ P_C \}_{C \in \Sg}$, where $P_C \in \Dist_{\Proj(\KK)}(\ES(C))$, $C \in \Sg$.
However, \emph{there is no reason for this family to be compatible}, so we do not in general obtain a \povm-model. 
In particular, while we do have the equations, for contexts $C$, $x \in C$, $o \in O_x$:
\[ V^* P_{x,o} V \; = \; \bigoplus_{s \in \ES(C), s(x) = o} V^* P_C(s) V , \]
we cannot infer from this that 
\[ P_{x,o} \; = \; \bigoplus_{s \in \ES(C), s(x) = o} P_C(s) . \]
Once again, considering the global level of measurement models exposes a requirement for uniformity across contexts which does not appear when considering individual \povm's, or even sets of unrelated \povm's.

\subsection{Uniform Dilation}

Let $R = \{ R_C \}_{C \in \Sg}$ in $\Mod_{\Eff(\HH)}(\XX)$ be a \povm-model. We say that $R$ has a \emph{uniform dilation} if there is an isometry $V : \HH \to \KK$, and \pvm-model $P = \{ P_C \}_{C \in \Sg}$ in $\Mod_{\Proj(\KK)}(\XX)$, such that
\[ R_C(s) = V^* P_C(s) V, \quad C \in \Sg, \, s \in \ES(C). \]
Recall from Remark~\ref{remark:pvms} that $P_C(s) = \prod_{x \in C} P_{x, s(x)}$, a product of commuting projections.

Given such a uniform dilation, for any density operator $\rho$ inducing a state $\sigma \in \St(\Eff(\HH))$, the empirical model $e = \sigma_*(R) = \sigma'_*(P)$, where $\sigma'$ is induced by $V \rho V^*$.
Thus if a uniform dilation exists for a \povm-model, any empirical model it realizes is also realized by a \pvm-model.

Whereas the standard dilation theorems show that dilations always exist if no uniformity conditions are imposed, we can see that the non-existence of \pvm~realizations show obstructions to the existence of uniform dilations.
A basic example is provided by the well-known Specker triangle \cite{liang2011specker}.

\begin{proposition}[Specker obstruction to uniform dilation]
Let \(M_1,M_2,M_3\) be three binary POVMs which are pairwise jointly
measurable but not triplewise jointly measurable \cite{liang2011specker}. Consider the
measurement scenario whose contexts are the three pairs
\[
\{1,2\},\qquad \{2,3\},\qquad \{1,3\}.
\]
Then the corresponding  \povm-model does not admit a uniform
dilation.
\end{proposition}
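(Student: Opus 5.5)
Suppose, for contradiction, that the \povm-model $R=\{R_C\}$ built from $M_1,M_2,M_3$ has a uniform dilation. That is, there is an isometry $V:\HH\to\KK$ and a \pvm-model $P=\{P_C\}$ in $\Mod_{\Proj(\KK)}(\XX)$ with $R_C(s)=V^*P_C(s)V$ for every context $C$ and section $s$. The plan is to show that $P$ forces triplewise joint measurability of $M_1,M_2,M_3$, which contradicts the hypothesis.

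The first step is to identify the single-measurement projections. Compatibility of $P$ gives well-defined binary \pvm's $P_{i}:=P_{\{i\}}$ for $i=1,2,3$, obtained as marginals of any context containing $i$. The marginals of $R$ are the $M_i$, and conjugation by $V$ commutes with finite sums. Hence $M_i(o)=V^*P_{i,o}V$ for each outcome $o$.

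The second step uses Remark~\ref{remark:pvms}, which is where sharpness enters. For each pair context $\{i,j\}$, joint measurability of the \pvm's $P_i,P_j$ by $P_{\{i,j\}}$ forces $P_i$ and $P_j$ to commute, with $P_{\{i,j\}}(o,o')=P_{i,o}P_{j,o'}$. So $P_1,P_2,P_3$ are pairwise commuting projective measurements. Pairwise commutation of projections gives global commutation of the generated family, and therefore
\[
Q(o_1,o_2,o_3):=P_{1,o_1}P_{2,o_2}P_{3,o_3}
\]
is a \pvm\ on $\KK$ with outcome set $\ES(\{1,2,3\})$. Its marginals are $P_1,P_2,P_3$.

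The third step compresses back to $\HH$. Set $G(o_1,o_2,o_3):=V^*Q(o_1,o_2,o_3)V$. Each $G(s)$ is positive, and since $V^*V=I$ we get $\sum_s G(s)=I$, so $G$ is a \povm\ on $\HH$. Its marginal onto $\{i\}$ is $V^*P_{i,\cdot}V=M_i$, and its marginal onto each pair is $V^*P_{\{i,j\}}V=R_{\{i,j\}}$. Thus $G$ is a global section of $R$, and in particular a joint measurement of $M_1,M_2,M_3$. This contradicts the assumption that the three are not triplewise jointly measurable, so no uniform dilation exists.

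The only point needing care is the second step: that pairwise commuting projections yield a genuine \pvm\ for the triple, that is, the products $Q(s)$ are projections summing to the identity. This follows from commutativity by expanding $\prod_i\bigl(\sum_{o}P_{i,o}\bigr)=I$. The rest is the linearity of $X\mapsto V^*XV$ together with the fact that it preserves positivity and, because $V$ is an isometry, the unit. I expect no serious obstacle, since the argument simply makes precise that sharp compatibility is transitive in the relevant sense while unsharp compatibility is not.
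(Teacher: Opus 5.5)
Your proposal is correct and follows essentially the same route as the paper. The uniform dilation gives context-independent sharp representatives, Remark~\ref{remark:pvms} makes them pairwise commuting, pairwise commutation yields the product joint \pvm\ $Q$, and compressing by $V$ gives a triplewise joint \povm\ for $M_1,M_2,M_3$, a contradiction. You fill in details the paper leaves implicit, namely that $Q$ is a genuine \pvm\ and that $V^*QV$ even restricts to the given $R_{\{i,j\}}$, but the argument is the same.
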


\begin{proof}
Suppose a uniform dilation existed. Then each \(M_i\) would be
represented by projections \(P_{i,0},P_{i,1}\), independent of context.
Since each pair \(\{i,j\}\) is a context, the projections associated with
\(M_i\) and \(M_j\) commute for every pair \(i,j\).

Thus the three dilated PVMs commute pairwise. For projections, pairwise
commutativity implies joint commutativity, so the three PVMs admit a
single joint PVM. Compressing this joint PVM back to the original Hilbert
space yields a triplewise joint POVM for \(M_1,M_2,M_3\), contradicting
the assumption that the POVMs are not triplewise jointly measurable.
\end{proof}

This example illustrates that uniform sharp dilatability is strictly stronger
than ordinary joint measurability. It includes all \pvm-models, but
excludes \povm-models whose compatibility structure cannot arise from
context-independent sharp representatives.

Uniform dilatability expresses a consistency requirement on the
physical implementation of measurements. It requires that each
measurement be represented by a context-independent sharp observable in
an extended system, with joint measurements arising from the combination
of commuting observables. This rules out models in which compatibility
relations or correlations are introduced in a context-dependent or
ad hoc manner, and captures the idea that measurement structure should
derive from a fixed underlying physical realisation.

Uniform dilatability should be understood as a structural
constraint rather than a fundamental physical principle. While it is
satisfied by standard quantum models based on projective measurements,
general \povm-models need not admit such a representation. The condition
therefore singles out a class of measurement models with a particularly
rigid and physically interpretable implementation.

\subsection{Dilation in Bell scenarios}

In the case of non-locality, \ie contextuality over Bell scenarios, the analysis simplifies considerably. Bell scenarios have a natural notion of locality, where the scenario can be decomposed into $n$ ``sites'' or ``agents'' (``Alice'', ``Bob'', etc.), who in the intended interpretation are causally independent/spacelike-separated. A quantum realization which captures this feature will have the property that any pair of measurements $M_i$ in site $i$ and $M_j$ in site $j$, $i \neq j$, will commute.
We say that \povm-models of this form \emph{have local measurements}, or are \emph{commuting operator models}.

A particular case of this is where the underlying Hilbert space is $\HH = \HH_1 \otimes \cdots \otimes \HH_n$, with a tensor factor for each site. The measurements will correspondingly be of the form $M_1 \otimes \cdots \otimes M_n$, formed as the tensor product of local measurements. We shall refer to models of this form as \emph{tensor product models}.

In finite dimensions, by Tsirelson's lemma \cite{tsirelson2006bell}, without loss of generality we can assume that the models are of tensor product form. In infinite dimensions, this is no longer the case \cite{ji2021mip}.

\begin{proposition}
For tensor product \povm-models  over Bell scenarios, uniform dilations always exist.
\end{proposition}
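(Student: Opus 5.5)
The strategy is to dilate each site separately and then tensor the results. Let $R = \{R_C\}_{C\in\Sg}$ be a tensor product \povm-model over a Bell scenario with sites $1,\dots,n$, on $\HH = \HH_1\otimes\cdots\otimes\HH_n$. By definition, for each site $i$ we have a finite family of local \povm's $R^i_x \in \Dist_{\Eff(\HH_i)}(O_x)$, one for each measurement $x$ at site $i$. A maximal context $C$ chooses one measurement $x_i$ per site, and
\[
R_C(s) = R^1_{x_1}(s(x_1))\otimes\cdots\otimes R^n_{x_n}(s(x_n)).
\]
The measurements at a single site form no context together beyond singletons, so no compatibility has to be preserved within a site. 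That is exactly what makes the multi-dilation theorem usable here: its outputs need not be mutually compatible, since only one measurement per site is ever used in a given context.

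First, for each site $i$, I apply Theorem~\ref{th:multidilations} to the finitely many \povm's $\{R^i_x\}$ at that site. Using the common-outcome-set extension noted after that theorem, this gives an isometry $V_i : \HH_i\to\KK_i$ and \pvm's $P^i_x$ on $\KK_i$ with $R^i_x(o) = V_i^* P^i_x(o) V_i$.

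Second, I set $\KK = \KK_1\otimes\cdots\otimes\KK_n$ and $V = V_1\otimes\cdots\otimes V_n$, which is again an isometry. For a context $C$ as above, I define
\[
P_C(s) := P^1_{x_1}(s(x_1))\otimes\cdots\otimes P^n_{x_n}(s(x_n)).
\]
Tensor products of projections are projections, and summing over $s$ gives $I$, so each $P_C$ is a \pvm. For subcontexts, which choose measurements at only some sites, I define $P_C$ by the same formula with identity factors at the missing sites.

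Third, I verify compatibility. Marginalizing $P_C$ over the outcome of $x_j$ replaces the factor $P^j_{x_j}(\cdot)$ by $\sum_o P^j_{x_j}(o) = I_{\KK_j}$. The result depends only on the remaining measurements, so restriction from any context to an intersection gives the same operator-valued distribution, and $P$ is a \pvm-model in $\Mod_{\Proj(\KK)}(\XX)$. The key point is that factors from different sites act on different tensor factors, so they commute automatically, consistent with Remark~\ref{remark:pvms}.

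Finally, $V^*P_C(s)V = \bigotimes_i V_i^*P^i_{x_i}(s(x_i))V_i = R_C(s)$, using $V_i^*V_i = I$ for the identity factors. This establishes the uniform dilation.

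The main obstacle is bookkeeping rather than analysis. Uniformity requires that each measurement $x$ receive one context-independent \pvm, and this holds because $P^i_x$ depends only on $x$. I also need to check carefully that non-maximal contexts and scenarios whose cover is not the full product are handled by identity padding. The case in which some site has infinitely many measurements would need care, since Theorem~\ref{th:multidilations} is stated for finitely many \povm's; I would assume the scenario is finite.
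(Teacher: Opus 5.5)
Your proposal is correct and follows the paper's proof. You apply Theorem~\ref{th:multidilations} site by site, tensor the isometries into $V_1\otimes\cdots\otimes V_n$, and take site-wise tensor products of the dilated \pvm's; the paper only asserts compatibility (citing \cite{AbramskyBrandenburger2011}), whereas you check it explicitly by marginalization with identity padding. One detail is worth adding: after the common-outcome-set extension, $P^i_x$ may assign nonzero projections to padded outcomes $o\notin O_x$ (these satisfy $V_i^*P^i_x(o)V_i=0$), so you should merge them into some genuine outcome to get an honest element of $\Dist_{\Proj(\KK_i)}(O_x)$, which leaves all the dilation equations unchanged.
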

\begin{proof}
We can  apply the standard Dilation Theorem~\ref{th:multidilations} to the measurements in each site, obtaining dilations $V_i : \HH_i \to \KK_i$. We then combine these as $V_1 \otimes \cdots \otimes V_n : \HH_1 \otimes \cdots \otimes \HH_n \to \KK_1 \otimes \cdots \otimes \KK_n$. It is straightforward to verify that the local projective measurements provided by the Dilation Theorem form a compatible family (see \cite{AbramskyBrandenburger2011}).
\end{proof}

As an immediate corollary, we obtain:
\begin{proposition}
All empirical models on Bell scenarios realized by tensor product \povm-models   are realized by tensor product \pvm-models.
\end{proposition}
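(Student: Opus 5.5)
The plan is to derive the statement as a direct corollary of the preceding proposition on uniform dilations for tensor product \povm-models, combined with the observation made earlier that uniform dilations transport realized empirical models along the isometry.

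Let $e$ be an empirical model on a Bell scenario with $n$ sites, realized by a tensor product \povm-model $R = \{R_C\}_{C \in \Sg}$ on $\HH = \HH_1 \otimes \cdots \otimes \HH_n$ together with a state $\sigma \in \St(\Eff(\HH))$, so that $e = \sigma_*(R)$. The proof proceeds in four steps.

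\textbf{Step 1 (state as density operator).} By Theorem~\ref{th:Busch}, write $\sigma(E) = \Tr(\rho E)$ for a density operator $\rho$ on $\HH$. If $\dim \HH = 1$, the state is trivial and the claim is immediate, so no separate treatment is needed.

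\textbf{Step 2 (uniform dilation).} Invoke the preceding proposition. Apply Theorem~\ref{th:multidilations} to the finitely many local \povm's at site $i$, obtaining an isometry $V_i : \HH_i \to \KK_i$ and local \pvm's $P^i_{x}$ with $R^i_x(o) = V_i^* P^i_x(o) V_i$. Then set $V = V_1 \otimes \cdots \otimes V_n$ and $P_C(s) = \bigotimes_i P^i_{x_i}(s(x_i))$. The one point needing care is that this is genuinely a \pvm-model of tensor product form: $P_C$ is a \pvm\ on $\KK_1 \otimes \cdots \otimes \KK_n$ because tensor products of \pvm's are \pvm's, and it is compatible because marginalizing over the outcomes at site $j$ sums $P^j_{x_j}$ to $I_{\KK_j}$. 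Compatibility therefore holds independently of the choice of other measurements, which is exactly the no-signalling structure of tensor products. Moreover, $V^* P_C(s) V = \bigotimes_i V_i^* P^i_{x_i}(s(x_i)) V_i = R_C(s)$.

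\textbf{Step 3 (transporting the state).} Set $\rho' = V \rho V^*$. This is a density operator on $\KK$, since $V$ is an isometry: it is positive, and $\Tr(V\rho V^*) = \Tr(\rho V^* V) = \Tr\rho = 1$. Let $\sigma'$ be the induced state on $\Proj(\KK)$. Then
\[ \sigma'(P_C(s)) = \Tr(V\rho V^* P_C(s)) = \Tr(\rho V^* P_C(s) V) = \Tr(\rho R_C(s)) = e_C(s), \]
so $\sigma'_*(P) = e$.

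\textbf{Step 4 (conclusion).} The model $P$ is a tensor product \pvm-model realizing $e$, which proves the statement.

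The main obstacle, such as it is, lies in Step 2: checking that the site-wise dilations assemble into a compatible, context-independent \pvm-model. The point to stress is that Theorem~\ref{th:multidilations} is applied once per site to all of that site's measurements simultaneously. This yields a single context-independent projective representative for each measurement, avoiding the context-dependence that obstructs uniform dilation in the Specker example. Commutation across sites is then automatic from the tensor structure.
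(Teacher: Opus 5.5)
Your proposal is correct and is essentially the paper's own argument. The paper obtains the statement as an immediate corollary of the existence of uniform dilations for tensor product \povm-models: Theorem~\ref{th:multidilations} is applied site by site, the isometries are combined as $V = V_1 \otimes \cdots \otimes V_n$, and the state is transported along $\rho \mapsto V\rho V^*$, exactly as in your Steps 2 and 3.

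One minor simplification: you can define $\sigma'(Q) := \sigma(V^* Q V)$ directly, since $Q \mapsto V^* Q V$ is an effect-algebra morphism $\Proj(\KK) \to \Eff(\HH)$. This makes Step 1 unnecessary and avoids relying on Busch's theorem, which in infinite dimensions needs $\sigma$-additivity of the state.
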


Thus for example, while PR-boxes have trivial \povm-realizations, these realizations necessarily have non-local measurements.

\begin{remark}
In the commuting-operator setting, it is known that POVM and PVM realizations yield the same correlation sets, using operator-algebraic dilation techniques based on maximal tensor products and Stinespring dilation \cite{Paulsen2002,PaulsenSeveriniStahlke2016}. Our approach suggests a stronger structural question: whether every  commuting-operator POVM-model admits a uniform sharp dilation as a compatible measurement model in the sense developed here.
\end{remark}

\subsection{Outlook}

We have conducted our discussion of uniform dilation at the concrete level of \povm-models. This allowed us to use the dagger structure on Hilbert spaces to express isometries, and, more importantly, to use the tracial nature of states to transfer the results to empirical models. It would be interesting to see to what extent this can be lifted to the general level of measurement models, with a suitable abstract notion of ``uniform sharp dilation''.

\section{Monoidal structure and resources}
\label{sec:monstruct_resources}

As we saw in Section~\ref{subsec:failmonad}, effect algebras and modules give rise to distribution functors but not to monads. However, there is a route to obtaining \emph{graded monads} which brings important ideas concerning resources and compositionality into play, leading to connections with quantum advantage in non-local games, quantum homomorphisms of graphs and relational structures, and quantum CSP's. This was already developed in a specific case in the \emph{quantum monad on relational structures} \cite{DBLP:conf/mfcs/AbramskyBSZ17}.

The clue here is that the basic quantum examples of effect algebras, $\Eff(\HH)$ and $\Proj(\HH)$, are \emph{indexed} by the underlying Hilbert space $\HH$. The idea is to take this indexing seriously, and to exploit the monoidal structure on Hilbert spaces.
It is well-established that monoidal categories form a mathematical basis for resource theories in physics \cite{coecke2016mathematical,fritz2017resource}.

To simplify the discussion, we assume we have a monoid of resources $(\RR, {\otimes}, I)$, rather than a general monoidal category. An 
\emph{$\RR$-graded effect monoid} is an assignment $\RR \to \EA$, $R \mapsto A_R$, together with a graded family of maps $\_ \otimes_{R,S} \_ : A_R \times A_S \to A_{R \otimes S}$, such that $A_I = \{ 0,1 \}$,  and:
\begin{enumerate}
\item $(a \otimes_{R,S} b) \otimes_{R \otimes S,T} c \; = \; a \otimes_{R, S \otimes T} (b \otimes_{S,T} c), \quad a\otimes_{R,I} 1 = a = 1 \otimes_{I,R} a$
\item If $a \orth b$ in $A_R$ then $(a \otimes_{R,S} c) \orth (b \otimes_{R,S} c)$ in $A_{R \otimes S}$ and symmetrically, and
\[ (a \oplus b) \otimes_{R,S} c \; = \; (a \otimes_{R,S} c) \oplus (b \otimes_{R,S} c), \quad a  \otimes_{R,S} (b \oplus c) \; = \; (a \otimes_{R,S} b) \oplus (a \otimes_{R,S} c) \]
\item $a \otimes_{R,S} 0 \; = \; 0  \; = \; 0 \otimes_{R,S} a$.
\end{enumerate}

\subsection*{Examples}
For the main quantum examples, we use the category of complex matrices, with natural numbers as objects. This is a monoidal skeleton of the category of finite-dimensional Hilbert spaces. Tensor on objects is multiplication. For each $\mathbf{n}$, $A_{\mathbf{n}} = \Eff(\Complex^n)$, where the effects are represented as matrices with respect to the standard basis. The graded product $E \otimes_{\mathbf{n}, \mathbf{m}} E'$ is the Kronecker product of matrices. This gives an $(\Nat, {\times}, 1)$-graded effect monoid.

We can similarly exhibit a graded structure over $(\Nat, {\times}, 1)$ for the projections, as is done implicitly in \cite{DBLP:conf/mfcs/AbramskyBSZ17}.

We now show how to construct an $\RR$-graded monad on $\Set$ from an $\RR$-graded effect monoid:
\begin{itemize}
\item For each $R \in \RR$, $\Dist_{A_R}$ is an endofunctor on $\Set$, as defined in Section~\ref{sec:eff-valueddists}.
\item The natural transformation $\eta^{I}_{X} : X \to \Dist_{\{ 0,1\}}(X)$ is also defined as in Section~\ref{sec:eff-valueddists}.
\item Given $R, S \in \RR$, the graded monad multiplication $\mu_{X}^{R,S} : \Dist_{A_R}(\Dist_{A_S}(X)) \to \Dist_{A_{R \otimes S}}(X)$ is defined by
\[ \mu_{X}^{R,S}(\Phi)(x) \; = \;  \bigoplus_{d \in \Dist_{A_S}(X)} \Phi(d) \otimes_{R,S} d(x) . \]
\end{itemize}
\begin{proposition}
The above data defines an $\RR$-graded monad on $\Set$.
\end{proposition}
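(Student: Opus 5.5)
The proof will check the axioms of an $\RR$-graded monad (in the sense of Katsumata's graded monads over a monoid): well-definedness of $\mu^{R,S}_X$ and $\eta^I_X$, naturality in $X$, the two unit laws $\mu^{I,R}_X\circ\eta^I_{\Dist_{A_R}X}=\mathrm{id}=\mu^{R,I}_X\circ\Dist_{A_R}(\eta^I_X)$, and the associativity law $\mu^{R,S\otimes T}_X\circ\Dist_{A_R}(\mu^{S,T}_X)=\mu^{R\otimes S,T}_X\circ\mu^{R,S}_{\Dist_{A_T}X}$. All of these are pointwise equalities in some $A_{R\otimes S}$ or $A_{R\otimes S\otimes T}$, where axiom~1 identifies the bracketings. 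Throughout we use finiteness of supports, so all $\bigoplus$'s are finite.

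\emph{Step 1 (well-definedness).} Given $\Phi\in\Dist_{A_R}(\Dist_{A_S}X)$, write its support as $d_1,\dots,d_k$. For fixed $i$, since $\bigoplus_x d_i(x)=1$, axiom~2 in the right argument gives $\bigoplus_x \Phi(d_i)\otimes d_i(x)=\Phi(d_i)\otimes 1$, with all the required orthogonalities. Then, since $\bigoplus_i\Phi(d_i)=1$, axiom~2 in the left argument gives $\bigoplus_i \Phi(d_i)\otimes 1=1\otimes 1$. The summands outside the support vanish by axiom~3. Here we need $1\otimes_{R,S}1=1$. This does not follow from the listed axioms literally: the unit law only gives $a\otimes_{R,I}1=a$. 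So I will either add it as the natural requirement of a graded effect monoid (it holds for Kronecker products, $I_n\otimes I_m=I_{nm}$), or flag it as implicit. This is the main point where I expect friction. The existence of the double sum, i.e.\ orthogonality of $\Phi(d_i)\otimes d_i(x)$ across different $i$ and $x$, follows because it is dominated by the total, which is defined via generalized associativity. Regrouping by $x$ then shows that $\mu(\Phi)(x)$ is defined and that $\bigoplus_x\mu(\Phi)(x)=1$.

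\emph{Step 2 (naturality).} For $f:X\to Y$, both sides at $y$ equal $\bigoplus_d\bigoplus_{f(x)=y}\Phi(d)\otimes d(x)$, up to reindexing $d\mapsto f_*d$. The latter merges those $d$ with equal pushforward, and the merging is handled by left distributivity. Naturality of $\eta$ is as in Section~\ref{sec:eff-valueddists}.

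\emph{Step 3 (units).} $\mu^{I,R}(\delta_d)(x)=1\otimes_{I,R}d(x)=d(x)$, since the zero terms vanish by axiom~3. Similarly $\mu^{R,I}(\Dist_{A_R}(\eta)(d))(x)=\bigoplus_{x'}d(x')\otimes\delta_{x'}(x)=d(x)\otimes 1=d(x)$.

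\emph{Step 4 (associativity).} For $\Theta\in\Dist_{A_R}\Dist_{A_S}\Dist_{A_T}X$, expand both sides at $x$. Using both distributivity laws and axiom~1, each side becomes $\bigoplus_{\Phi,d}\Theta(\Phi)\otimes\Phi(d)\otimes d(x)$. The left side produces it by pushing $\Theta$ forward along $\mu^{S,T}$ and merging fibres with left distributivity. The right side produces it by distributing first with right distributivity and then regrouping. The regrouping of finite partial sums is justified by generalized associativity/commutativity of $\oplus$ (the order-independence proposition), together with definedness inherited from Step~1.
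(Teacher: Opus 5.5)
Your direct check of well-definedness, naturality, the two unit laws and associativity is essentially the paper's approach: its proof simply defers to the quantum-monad verification of \cite{DBLP:conf/mfcs/AbramskyBSZ17}, which uses only the generic graded-effect-monoid properties. You are also right that normalization of $\mu^{R,S}_X(\Phi)$ requires $1\otimes_{R,S}1=1$ and that this does not follow from the listed axioms. For example, grade by the positive integers under multiplication, set $a\otimes_{R,S}b:=0$ whenever $R\neq I\neq S$, and use the evident $\{0,1\}$-action when one grade is $I$. This satisfies axioms 1--3 but gives $\mu^{R,S}_X(\Phi)=0$. So the condition must be added to the definition, and it does hold for Kronecker products.
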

\begin{proof}
The proof follows exactly the same lines as that given in \cite{DBLP:conf/mfcs/AbramskyBSZ17} for the quantum monad, which only uses the generic properties of an $\RR$-graded effect monoid.
\end{proof}

The further developments in \cite{DBLP:conf/mfcs/AbramskyBSZ17} show how the quantum monad can be used to study quantum advantage in contextuality, non-local games, quantum homomorphisms and CSP. The emphasis in \cite{DBLP:conf/mfcs/AbramskyBSZ17} is on perfect strategies. The generalization to effect-valued distributions and measurement models provides a natural setting for studying quantitative aspects.

\section{Final remarks}

The theory developed here generalizes the sheaf-theoretic treatment
of contextuality from probability-valued empirical models to
effect-valued measurement models. This separates internal measurement
structure from observable probabilistic behaviour, and reveals a new
distinction between internal and observable contextuality. A basic
structural result identifies internal non-contextuality with the
existence of coherent classical explanations across all observable
windows.

The resulting theory connects sheaf-theoretic contextuality, effect
algebras, generalized probabilistic theories, and convex-geometric
methods within a common setting. In particular, contextuality acquires a
natural cone-theoretic formulation, contextual fraction extends to
effect-valued models, and sharp realizability appears as a coherence
condition on dilation structure.

More broadly, the framework separates the local-to-global structure of
contextuality from specifically probabilistic notions such as
factorisability, while also supporting compositional extensions via
graded monadic constructions. This suggests a basis for further
developments linking contextuality, quantum resources, generalized
measurements, and cohomological obstructions.

We mention some specific directions for future work.
\begin{itemize}
\item Extending cohomological characterizations of contextuality \cite{DBLP:journals/corr/abs-1111-3620,DBLP:conf/csl/AbramskyBKLM15,okay2017topological} to measurement models. The work by Roumen on cohomology of effect algebras \cite{roumen2016cohomology} should be relevant here.
\item Developing a resource theory for contextuality, generalizing \cite{AbramskyBarbosaMansfield2017,abramsky2019comonadic}.
\item Combining causality and contextuality at the level of measurement models, generalizing \cite{abramsky2024combining}.
\item Developing Vorob'ev-type theorems \cite{Vorob1962consistent} for measurement models.
\end{itemize}

\begin{acknowledgements}
Discussions with Lucas Stinchcombe, Martti Karvonen, and Amin Karamlou provided valuable stimulus for this work.

Support from EPSRC Fellowship EP/V040944/1 ``Resources in Computation'' is gratefully acknowledged.
\end{acknowledgements}

\bibliographystyle{amsplain}
\bibliography{bpbib}

\end{document}